\newcommand{\vect}[1]{\boldsymbol{#1}}
\DeclareMathOperator{\sgn}{\text{sgn}}
\DeclareMathOperator*{\argmax}{arg\,max}
\newtheorem{remark}{Remark}
\newtheorem{theorem}{Theorem}
\newtheorem{lemma}{Lemma}
\newtheorem{assumption}{Assumption}
\newtheorem{definition}{Definition}
\newtheorem{proposition}{Proposition}
\newtheorem{example}{Example}
\def\bpmat{\begin{pmatrix}}
\def\epmat{\end{pmatrix}}
\def\bbmat{\begin{bmatrix}}
\def\ebmat{\end{bmatrix}}
\def\be{\begin{equation}}
\def\ee{\end{equation}}
\def\bc{\begin{cases}}
\def\ec{\end{cases}}
\def\bes{\begin{equation*}}
\def\ees{\end{equation*}}
\def\gA{{\cal A}}
\def\gG{{\cal G}}
\def\gV{{\cal V}}
\def\sgn{\text{sgn}}
\def\bla{\color{black}}
\begin{document}
\title{Coevolution of Actions and Opinions in Networks of Coordinating and Anti-Coordinating Agents}

\author{Hong Liang, Mengbin Ye, \IEEEmembership{Member, IEEE}, Lorenzo Zino, \IEEEmembership{Senior Member, IEEE}, and Weiguo Xia, \IEEEmembership{Member, IEEE}\IEEEcompsocitemizethanks{\IEEEcompsocthanksitem H. Liang and W. Xia are with the Key Laboratory of Intelligent Control and Optimization for Industrial Equipment of the Ministry of Education and the School of Control Science and Engineering, Dalian University of Technology, Dalian, China. (e-mail: hongliang56@mail.dlut.edu.cn, wgxiaseu@dlut.edu.cn). M. Ye is with the School of Computer and Mathematical Sciences, University of Adelaide, Adelaide, ustralia (e-mail: ben.ye@adelaide.edu.au). L. Zino is with the Department of Electronics and Telecommunications, Politecnico di Torino, Torino, Italy (e-mail: lorenzo.zino@polito.it). This work was supported in part by the National Natural Science Foundation of China (61973051, 62122016), and LiaoNing Science and Technology Program (2023JH2/101700361). Corresponding author: Weiguo Xia}}

\markboth{Preprint under review}%
{Liang \MakeLowercase{\textit{et al.}}: shorttitle}

\IEEEtitleabstractindextext{
\begin{abstract}
In this paper, we investigate the dynamics of coordinating and anti-coordinating agents in a coevolutionary model for actions and opinions. In the model, the individuals of a population interact on a two-layer network, sharing their opinions and observing others' action, while revising their own opinions and actions according to a game-theoretic mechanism, grounded in the social psychology literature. First, we consider the scenario of coordinating agents, where convergence to a Nash equilibrium (NE) is guaranteed. We identify conditions for reaching consensus configurations and establish regions of attraction for these equilibria. Second, we study networks of anti-coordinating agents. In this second scenario, we prove that all trajectories converge to a NE by leveraging potential game theory. Then, we establish analytical conditions on the network structure and model parameters to guarantee the existence of consensus and polarized equilibria, characterizing their regions of attraction. 
\end{abstract}
\begin{IEEEkeywords}
Evolutionary game theory, network system, opinion dynamics, social network, influence network 
\end{IEEEkeywords}}

\maketitle
\IEEEpeerreviewmaketitle
\IEEEraisesectionheading{\section{Introduction}\label{sec:intro}} 

\IEEEPARstart{O}{ver} the past decades, extensive research has been conducted on dynamic models of opinion formation across multiple disciplines
~\cite{french1956_socialpower,hegselmann2002opinion,biswas2009model,ding2010evolutionary,friedkin2011social_book,proskurnikov2017tutorial,ravazzi2021learning,Loy2022}.
These models have been instrumental in understanding opinion formation processes and social power,  further analyzing complex social phenomena, such as the emergence of consensus, oscillations and disagreement, or polarization~\cite{friedkin1990social,altafini2012consensus,bolzern2019opinion,acemoglu2011opinion, Velarde2020Polarization,shi2022,DePasquale2022,Bizyaeva2023}. Moreover, they have shown significant potential for applications in different areas, e.g., marketing~\cite{godes2009firm}, advertising~\cite{candia2009advertising}, recommendation systems~\cite{castro2017opinion,tang2013social}, and epidemic mitigation~\cite{She2022,Xu2024}. 
Meanwhile, evolutionary game theory has emerged as a powerful framework for realistically predicting how individuals make decisions from a limited set of possible actions and how these actions are revised due to social interactions~\cite{Jackson2015, young2011dynamics, montanari2010spread_innovation, riehl2018survey, flache2017opdyn_survey,ye2021nat}.

Research from social psychology, supported by empirical data, shows a strong intertwining between an individual's opinion and their actions. These studies highlight that within a group, sharing opinions can influence people’s behaviors, and in turn, an individual's opinion may be shaped by observing the behaviors of others~\cite{henry2003beyond,haidt2001emotional,gavrilets2017collective}. 
Mathematical models have been proposed to capture the nontrivial interdependency between opinion and actions~\cite{martins2009opinion,martins2008continuous,chowdhury2016,ceragioli2018,varma2017}, but typically assume that an agent's action is only directly influenced by their own opinion, neglecting the presence of other mechanisms, e.g., social pressure. Recently, a novel paradigm has been proposed, in which actions and opinions coevolve on a two-layer network model~\cite{zino2020two,zino2020coevolutionary,aghbolagh2023coevolutionary}.

In particular, Dehghani Aghbolagh \textit{et al.} proposed a coevolutionary model for opinions and actions in an evolutionary game-theoretic form~\cite{aghbolagh2023coevolutionary}. The proposed model accounts for social influence due to opinion sharing between individuals, social pressure due to observing the actions of others, and an individual's tendency to act in consistency with their opinion. 
Potential game theory was leveraged to prove the global convergence of the dynamics to a Nash equilibrium (NE)~\cite{monderer1996potential}, and the existence and stability of polarized equilibria were studied.
The model of Dehghani Aghbolagh \textit{et al.} relies on the assumption that individuals are always faced with a situation of strategic complements, and thus tend to coordinate with others on the same action.  However, this is not the case in many real-life scenarios, where strategic substitutes are often present~\cite{Bramoull2007}. For instance, when a similar service is offered by different providers that are subject to congestion (e.g, different paths to reach a destination), the provider that is less congested becomes more attractive. Hence, the observation of others' actions yields a negative feedback effect, whereby agents tend to anti-coordinate with others to reduce congestion.

In this paper, we build on the coevolutionary model proposed in~\cite{aghbolagh2023coevolutionary}, and extend the model to account for the presence of coordinating or anti-coordinating agents to address the aforementioned limitation. Besides proposing a general formulation of the coevolutionary model, which can be used either in the presence of strategic complements or substitutes, our contribution is threefold, centered on the two emergent phenomena of consensus and polarization. First, we focus on the scenario of coordinating agents, and investigate how actions in coordination games enable social groups to reach a consensus despite their diversity, with the entire population selecting and supporting a certain action. Specifically, we establish analytical conditions on the network structure (depending on the model parameters) for the existence of consensus equilibria, and for the characterization of their basins of attraction. Second, we focus on the scenario of anti-coordinating agents, and we use the theory of potential games to prove that the dynamics converge to a NE~\cite{monderer1996potential}. Third, building on this result, we study two types of possible equilibria of interests ---consensus and polarized states--- and we establish conditions for their existence and convergence. Together with the characterization of polarized equilibria for the scenario with coordinating agents from~\cite{aghbolagh2023coevolutionary}, our work thoroughly explores consensus and polarization in networks with coevolving actions and opinions, in the presence of anti-coordinating agents.


The rest of this paper is organized as follows. Section~\ref{sec:prelim} provides the notational and mathematical preliminaries. In Section~\ref{sec:model}, we introduce the model, and derive its closed-form update rule. 
The main results are presented in Sections~\ref{sec: results_coo} and~\ref{sec: results_anti_coord}, respectively. Section~\ref{sec:conclusion} concludes the paper.
\bla
\section{Preliminaries}\label{sec:prelim}
\subsection{Notation and Graph Theory}

We use bold lowercase font to denote vectors (e.g., $\vect{v}\in\mathbb R^n$), while $v_i$ denotes its $i$th entry. The $n$-dimensional vectors of all $1$s and all $0$s are denoted as $\vect{1}_n$ and $\vect 0_n$, respectively. We use Roman uppercase font to denote matrices (e.g., $M\in\mathbb R^{n\times m}$), denoting by $m_{ij}$ the $j$th entry of its $i$th row. A matrix $M$ is said to be row stochastic if all entries are non-negative and each row sums to $1$. 

A graph is defined by $\mathcal G=(\mathcal V,\mathcal E)$, where $\mathcal V=\{1,\dots,n\}$ is the node set and $\mathcal E\subseteq \mathcal V\times\mathcal V$ is the set of edges, such that $(i,j)\in\mathcal E$ if and only if (iff) there is a direct edge from $i$ to $j$. In general, a graph is assumed to be directed, unless it holds that $(i,j)\in\mathcal E \Leftrightarrow (j,i)\in\mathcal E$, i.e., all connections are bi-directional. In this case, the network is said to be undirected. Given a real square matrix $M\in\mathbb R^{n\times n}$ with $m_{ij}\geq 0$, we can associate a weighted (directed) graph $\mathcal{G}[M]:= (\mathcal{V}, \mathcal{E}_M, M)$ to $M$, where $M$ is called the (weighted) adjacency matrix of this graph and $(i,j)\in\mathcal{E}_M$ iff $m_{ij}>0$. 

\subsection{Game Theory}
A game $\Gamma=(\gV,\gA,\vect{f})$ is defined by three terms: 1) a set of players $\mathcal{V}=\{1,\ldots,n\}$, 2) a set of strategies $\gA$, and 3) a payoff function $\vect{f}$. Player $i$ can choose their strategy from the set $\mathcal A$, which is denoted by $z_i\in \gA$. Strategies are gathered in the vector $\vect{z}=[z_1,\dots,z_n]^\top$ which is called strategy profile. We further define the vector $\vect{z_{-i}}\in \gA^{n-1}$ that represents the strategy profile of all agents except for agent $i$. 
For each player $i\in\mathcal V$, we define their payoff function $f_i:\mathcal A\times\mathcal A^{n-1}\to\mathbb R$, denoted as $f_i(z_{i},\vect{z_{-i}})$, which associates to player $i$ the payoff that $i$ would receive by choosing strategy $z_i\in\mathcal A$ when the action profile of the others is $\vect{z_{-i}}$. Payoff functions are gathered in the vector function $\vect{f}=[f_i,\dots,f_n]^\top$.
Given a game $\Gamma=(\gV,\gA,\vect{f})$, we can define for each player $i$ the notion of the best response given the strategy profile of the others, $\vect{z_{-i}}$, as the strategy that they should play in order to maximize their payoff.
\begin{definition}[Best response]\label{def:best_response}
    Given a game $\Gamma =(\mathcal{V},\gA,\bm{f})$, the best response strategies for player $i \in\mathcal{V}$ 
    is defined as
\begin{equation}\label{eq:best_response}
    \mathcal{B}_i(f_i(\cdot,\vect{z_{-i}}))\triangleq\{\zeta^{*}:\;\zeta^{*}\in\argmax\nolimits_{\zeta\in \gA}f_i(\zeta,\vect{z_{-i}})\}.
\end{equation}
\end{definition}

A NE is a key concept in game theory, representing strategy profiles $\vect z$ where no player has any incentives to deviate from the current strategy, as formalized in the following.

\begin{definition}[Nash equilibrium]\label{def_Nash_equ}
The strategy profile $\vect{z}^{*}$ is a NE iff $f_i(z_{i}^{*},\vect{z_{-i}}^{*})\geq f_i(z_{i}^{'},\vect{z_{-i}}^{*})$,  $\forall\,z_{i}^{'}\in \gA$ and $\forall\,i\in\mathcal{V}$.
\end{definition}

In other words, at a NE, all players are playing their current best response to the other players' strategies, yielding the following classical result from~\cite{fudenberg1991game}.

\begin{lemma}\label{lemma_Nash_equ}
The strategy profile $\vect{z}^{*}$ is a NE iff 
$z_{i}^{*}\in \mathcal{B}_i\big(f_i(\cdot,\vect{z_{-i}^{*}})\big)$, for all $i\in\mathcal{V}$.
\end{lemma}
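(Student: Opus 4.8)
The statement is a reformulation of the definition of a NE in terms of best responses, so the plan is to establish the biconditional by directly unwinding Definitions~\ref{def:best_response} and~\ref{def_Nash_equ}, treating the two implications separately. The key observation throughout is that, for a fixed player $i$ and a fixed rival profile $\vect{z_{-i}}^*$, the map $\zeta\mapsto f_i(\zeta,\vect{z_{-i}}^*)$ is an ordinary real-valued function on $\mathcal A$, so ``being a best response for $i$'' is literally ``being a maximizer'' of this function, and the whole argument reduces to pairing up this maximality condition with the inequality in the definition of a NE.

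First I would prove the forward implication. Assume $\vect z^*$ is a NE and fix an arbitrary $i\in\mathcal V$. Definition~\ref{def_Nash_equ} gives $f_i(z_i^*,\vect{z_{-i}}^*)\geq f_i(z_i',\vect{z_{-i}}^*)$ for every $z_i'\in\mathcal A$, which says exactly that $z_i^*$ attains the maximum of $\zeta\mapsto f_i(\zeta,\vect{z_{-i}}^*)$ over $\mathcal A$, i.e.\ $z_i^*\in\argmax\nolimits_{\zeta\in\mathcal A}f_i(\zeta,\vect{z_{-i}}^*)$; by Definition~\ref{def:best_response} this is precisely $z_i^*\in\mathcal B_i\big(f_i(\cdot,\vect{z_{-i}}^*)\big)$. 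Since $i$ was arbitrary, this holds for all $i\in\mathcal V$. For the converse, assume $z_i^*\in\mathcal B_i\big(f_i(\cdot,\vect{z_{-i}}^*)\big)$ for every $i$, and again fix $i$. By Definition~\ref{def:best_response}, $z_i^*\in\argmax\nolimits_{\zeta\in\mathcal A}f_i(\zeta,\vect{z_{-i}}^*)$, hence $f_i(z_i^*,\vect{z_{-i}}^*)\geq f_i(\zeta,\vect{z_{-i}}^*)$ for all $\zeta\in\mathcal A$; specializing $\zeta=z_i'$ recovers the defining inequality in Definition~\ref{def_Nash_equ}, and since this holds for every $i$, $\vect z^*$ is a NE.

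Because both directions are chains of definitional equivalences, there is no genuine analytical difficulty; the only point deserving a line of care is that $\mathcal B_i\big(f_i(\cdot,\vect{z_{-i}}^*)\big)$ be nonempty, i.e.\ that the maximum defining it is actually attained, so that membership in it is meaningful. In the setting of this paper the action set $\mathcal A$ is finite, so this is automatic and I would merely note it rather than invoke any existence argument. A slightly slicker alternative is to prove the contrapositive of one implication --- if $z_i^*\notin\mathcal B_i$ for some $i$, then there is a strictly profitable unilateral deviation $z_i'$, contradicting the NE property --- but the direct argument above is the shortest route and keeps the logical content of the lemma fully transparent.
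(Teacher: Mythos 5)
Your proof is correct: the paper states this lemma as a classical result cited from the game-theory literature and gives no proof, and your argument is exactly the standard definitional unwinding one would write, with both directions following immediately from pairing the inequality in Definition~\ref{def_Nash_equ} with the $\argmax$ characterization in Definition~\ref{def:best_response}. The only (immaterial) imprecision is your remark that $\mathcal{A}$ is finite --- in the coevolutionary model it is the compact set $\{-1,1\}\times[-1,1]$ --- but nonemptiness of $\mathcal{B}_i$ plays no role in the biconditional as stated, so the argument stands as written.
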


A network game is a game $\Gamma=(\gV,\gA,\vect{f})$ where the payoff function is modulated by a network $\mathcal G=(\mathcal V,\mathcal E)$. Specifically, the payoff that each agent $i$ would receive depends only on the strategy of their neighbors $\mathcal N_i:=\{j:(i,j)\in\mathcal E\}$. In the following, we present two examples of network games that will be used in this paper.
\begin{example}[Network coordination and anti-coordination]
Consider a population of $n \geq 2$ players, where each player has a binary strategy set $\mathcal A=\{-1,+1\}$. Player interactions are represented via a weighted graph $\mathcal G[A]=(\mathcal V,\mathcal E_A, A)$, where the weight $a_{ij}$ represents the impact of the strategy of $j$ on $i$'s payoff. 
Player $i$ can be either a coordinating agent or an anti-coordinating agent. A coordinating agent receives a positive contribution to their payoff from each of their neighbors with whom they coordinate (i.e., they adopt the same strategy). On the contrary, an anti-coordinating agent would receive a positive contribution to their payoff from each of their neighbors with whom they do not coordinate (i.e., they adopt different strategies).

In mathematical terms, given strategy profile $\boldsymbol{x}$, each player $i\in\mathcal{V}$ is associated with the following payoff function:
\begin{equation}\label{eq:payoff_coordination}
\begin{split}
f_i\left(x_i, \boldsymbol{x_{-i}}\right)&=\frac{\varepsilon_i}{4} \sum\nolimits_{j \in \mathcal{V}} a_{i j}\left[\left(1-x_j\right)\left(1-x_i\right) \right.\\
&\left.+(1+\alpha)\left(1+x_j\right)\left(1+x_i\right) \right] 
\end{split}
\end{equation}
with 
\begin{equation}
    \varepsilon_i=\left\{\begin{array}{ll}+1&\text{if $i$ is a coordinating agent,}\\
    -1&\text{if $i$ is an anti-coordinating agent.}
\end{array}\right.
\end{equation}
The constant $\alpha \geq 0$ is a non-negative parameter that captures the relative advantage of strategy $+1$ with respect to strategy $-1$: a coordinating agent $i$ would get a unit payoff for coordinating on action $-1$ and payoff $1+\alpha$ for coordinating on action $+1$; an anti-coordinating agent would get the opposite payoffs.

Observe that the best-response strategy for the payoff function in \eqref{eq:payoff_coordination} as defined in \eqref{eq:best_response} yields a threshold such that $+1 \in \mathcal{B}_i\left(f_i(\cdot, \boldsymbol{x})\right)$ iff a discriminant quantity $\delta_i(\boldsymbol{x})$ satisfies
\begin{equation}
    \delta_i(\boldsymbol{x})=\varepsilon_i\sum\nolimits_{j \in \mathcal{V}} a_{i j}\left[2 x_j+\alpha\left(1+x_j\right)\right] \geq 0.
\end{equation}

In other words, action $+1$ is the best response for a coordinating (anti-coordinating, respectively) agent if a sufficiently large (small, respectively) fraction of the neighbors adopts it (weighted by the entries of matrix $A$ and by the relative advantage $\alpha$). 
\end{example}

\begin{remark}Network coordination and anti-coordination games have been extensively used in the context of decision-making~\cite{Jackson2015}. On the one hand, network coordination games model strategic complements, whereby a player choosing a strategy makes that strategy more appealing to other players. This captures, e.g., the adoption of social norms and conventions, or the adoption of new technological assets~\cite{montanari2010spread_innovation,young2011dynamics,ye2021nat}. On the other hand, network anti-coordination games are representative of scenarios with strategic substitutes, whereby a player choosing a strategy makes that strategy less appealing to others. This is the case, e.g., of competition for resources that can be subject to congestion~\cite{Bramoull2007,alvarez2004network}. In real-world scenarios, there could also be situations in which part of the population has a tendency to conform with others and part of the population aims at differentiation, yielding heterogeneous $\varepsilon_i$ across the population~\cite{ramazi2016networks,
Vanelli2020}.
\end{remark}

\begin{example}[Opinion dynamics]
The French-DeGroot opinion dynamics model~\cite{french1956_socialpower,degroot1974reaching} can be cast as a network game~\cite{marden2009game}. 
We consider a population of $n\geq 2$ players, where each player has a continuous strategy $\mathcal A=[-1,+1]$, representing the agent's opinion. We denote by $y_i(t)\in\mathcal A$ the opinion of $i$ at discrete time $t$. 
agents share their opinions through interactions on a weighted graph $\mathcal{G}[W]=\left(\mathcal{V}, \mathcal{E}_W, W\right)$, where $W$ is assumed to be row stochastic. The weight $w_{ij}$ represents the social influence of $j$ on $i$ in its opinion formation process. The opinion formation process is captured by a network game, with a payoff function 
\begin{equation}\label{eq:payoff_opinion}
    f_i\left(y_i, \vect{y_{-i}}\right)=-\frac{1}{2}\sum\nolimits_{j \in \mathcal{V}} w_{i j}\left(y_i-y_j\right)^2.
\end{equation}
The best-response strategy to \eqref{eq:payoff_opinion} yields 
\begin{equation}\label{eq_best_response_opinion}
  \mathcal{B}_i\left(f_i(\cdot, \vect{y})\right) =\sum\nolimits_{j \in \mathcal{V}} w_{i j} y_j(t),  
\end{equation}
i.e., a weighted average of the opinions shared by the neighbors~\cite{degroot1974reaching}. An additional term can be added to the payoff function to account for agents' attachment to their prejudices~\cite{friedkin1990_FJsocialmodel}.
\end{example}

\section{Model}\label{sec:model}

In this section, we present the coevolutionary model of actions and opinions for networks of coordinating and anti-coordinating agents. We start by formulating the payoff function, building on~\cite{aghbolagh2023coevolutionary}, and discussing its main terms. Then, we present the model dynamics by defining the update rule and expressing a closed-form expression for it.

\subsection{Payoff Function}
Consider a population of $n \geq 2$ agents. According to the coevolutionary dynamics proposed in~\cite{aghbolagh2023coevolutionary}, at each discrete-time instant $t \in\mathbb Z_{>0}$,
each agent $i\in \mathcal{V}$ is characterized by a binary state variable $z_i(t)=(x_i(t),y_i(t))\in \gA=\{-1,1\}\times [-1,1]$. The first entry, $x_i(t)\in\{-1,+1\}$, represents the agent's binary action, while the second entry, $y_i(t) \in [-1,1]$, is the agent's opinion, viz. their orientation and preference towards the two action options. Namely, $y_i(t) > 0$ and $y_i(t) < 0$ indicate the agent~$i$ prefers action $+1$ and action $-1$, respectively, with the magnitude capturing the strength of preference. Agents' state variables are gathered in a $n$-by-$2$ matrix $\vect z(t)=[\vect x(t),\vect y(t)]=[z_1(t)^{\top},z_2(t)^{\top},\ldots,z_n(t)^{\top}]^{\top}$, which represents the state of the population at time $t$. 

Agents interact on social networks, exchanging opinions on a weighted graph $\mathcal{G}[W]$ and observing the actions of others on a weighted graph $\mathcal{G}[A]$, with weighted adjacency matrices $W$ and $A$ that are both assumed to be row-stochastic. Hence, each entry in $W$ and $A$ represents the relative social influence of other players on opinions and actions, respectively.

In~\cite{aghbolagh2023coevolutionary}, agents revise their actions assuming that they engage in a game, with a payoff function that accounts for coordination with others' actions, the social influence of others' opinions, and self-consistency between their own actions and opinions. Here, we build on that model and extend it to allow agents to anti-coordinate with their peers. Specifically, given a state $\vect{z} = (\vect{x}, \vect{y})$, we define the payoff function $f_i(z_i, \vect{z_{-i}})$ for agent $i$ to adopt strategy $z_i= (x_i,y_i)$ as
\begin{equation}\label{def:payoff_func_coo_anti}
\begin{split}
    &f_i (z_i,\vect{z_{-i}}) = \frac{\varepsilon_i\lambda_i(1-\beta_i)}{4}\sum\nolimits_{j\in \mathcal V}a_{ij}\Big[(1-x_i)(1-x_j)\\
     &\qquad+(1+\alpha)(1+x_j)(1+x_i)\Big]\\
     &-\frac{1}{2}(1-\lambda_i)\beta_i\sum\nolimits_{j \in \mathcal V}w_{ij}(y_i\hspace{-.07cm}-y_j)^2  
     -\frac{1}{2}\lambda_i\beta_i(y_i-x_i)^2.
\end{split}
\end{equation}
This expression accounts for three terms. The first term involves only the action and coincides with the payoff for a coordination or anti-coordination game in \eqref{eq:payoff_coordination}. The second term depends only on the opinion and is associated with the payoff of a French-DeGroot opinion dynamics in \eqref{eq:payoff_opinion}. The third term accounts for the human tendency to behave consistently with their opinions, penalizing the adoption of actions that are distant from them. While the last two terms coincide with those in \cite{aghbolagh2023coevolutionary}, the first one differs by the inclusion of the parameter $\varepsilon_i\in\{-1,+1\}$, accounting for whether agent $i$ is a coordinating or an anti-coordinating agent. In particular, $i$ engages in a pairwise coordination game if $\varepsilon_i=+1$ or a pairwise anti-coordination game if $\varepsilon_i=-1$. The payoff in \eqref{def:payoff_func_coo_anti} coincides with the one 
 from~\cite{aghbolagh2023coevolutionary} if $\varepsilon_i=+1$ for all $i\in\mathcal V$. Parameters $\lambda_i\in[0,1]$ and $\beta_i\in[0,1]$ weight the contribution of actions and opinions in determining agent $i$'s payoff, respectively. These two parameters also regulate the coupling of opinion formation and decision-making in the third term in \eqref{def:payoff_func_coo_anti}.  
    Similarly to~\cite{aghbolagh2023coevolutionary}, a fourth term can be added to account for agents' prejudices. Here, we omit this term, since it is not relevant to our analyses.

\subsection{Discussion on the model}\label{sec:discussion}

The game with payoff function in \eqref{def:payoff_func_coo_anti} was originally proposed in \cite{aghbolagh2023coevolutionary} with $\varepsilon_i=1$ for all $i\in\mathcal V$ as a combination of a network coordination game (first term in the payoff structure) and a French-DeGroot opinion dynamics~\cite{friedkin1990_FJsocialmodel}, which are coupled by the consistency term. Such a payoff function was proposed to capture the complex intertwining between opinion formation processes and decision-making in social networks. In fact, empirical evidence suggests that the exchange of opinions can affect agents' actions, while the way people form opinions can be influenced by observing the actions of others~\cite{henry2003beyond,haidt2001emotional,gavrilets2017collective}. However, the model was proposed in the context of actions with strategic complements, whereby the decision-making is captured by a coordination mechanism, and both opinion sharing and observation of others' actions yield a positive feedback loop.

Here, we relax this assumption, and we allow to have strategic substitutes, and thus, anti-coordinating agents. In fact, for decisions concerning actions with strategic substitutes, opinion formation, and decision-making are also deeply intertwined. For instance, when a similar service can be offered by different providers who are subject to congestion (e.g, two different restaurants to have lunch at or two different streets to take in an infrastructure network), agents still share their opinions on which provider they believe is better/preferred, and this still has a (positive) influence on others' opinions, even though the observation of others' actions yields now a different feedback effect: the provider that is less congested becomes more attractive, yielding an anti-coordination game. 

In general, complex decision-making may involve, to some extent, both strategic complements and substitutes, and agents can be heterogeneously more subject to one or the other. This is the case, e.g., of marketing decisions, where there are agents who prefer to conform with the majority, while others prefer to differentiate from it. For this reason, each agent $i\in\mathcal V$ is characterized by a parameter $\varepsilon_i\in\{-1,+1\}$, which determines the mechanism (anti-coordination or coordination) that affects the agent. 
Finally, it is worth noticing that opinions always provide a positive social influence, consistent with empirical observations on the limited impact of negative social influence~\cite{Takcs2016}.
\subsection{Update Rule}

Agents revise their strategy by maximizing their payoff function jointly with respect to their two decision variables (actions and opinions), as proposed in~\cite{aghbolagh2023coevolutionary}. 
At each time $t$, each agent is either activated or inactive. An agent $i$ that is activated updates their opinion and action simultaneously according to the best-response update rule:
\begin{equation}\label{eq:async_update}
 z_i(t+1) = \mathcal{B}_i(f_i(\cdot,\vect{z_{-i}}(t))).
\end{equation}
where $\mathcal{B}_i$ is the best-response strategy defined in \eqref{eq:best_response} and
$f_i(\cdot,\vect{z_{-i}}(t))$ is the payoff from \eqref{def:payoff_func_coo_anti}, with the convention that, if $\mathcal{B}_i(f_i(\cdot,\vect{z_{-i}}(t)))$ comprises multiple elements, then it is chosen the one with $x_i(t+1)=x_i(t)$, which is unique for $\beta_i>0$ (see~\cite{aghbolagh2023coevolutionary}), to account for social inertia in human decision-making~\cite{ye2021nat}. For all other agents who are inactive, their strategies are unchanged: $z_k(t+1) = z_k(t)$. In other words, active agents tend to maximize their current payoff, based on the current action of others, as supported by empirical evidence~\cite{mas2016behavioral}. 
Following~\cite{aghbolagh2023coevolutionary}, we derive a closed-form expression for the update rule of the coevolutionary dynamics by explicitly writing the maximizers of~\eqref{def:payoff_func_coo_anti}.
\begin{proposition}\label{prop:dyn}
Given a state $\vect{z}=(\vect{x},\vect{y})\in \gA$ and an agent $i\in\mathcal{V},$ let us define the discriminant 
\begin{equation}\label{delta_i(z)}
 \begin{aligned}
  \delta_i(\vect{z}(t)) & = \frac{\varepsilon_i\lambda_i(1-\beta_i)}{2}\sum_{j\in \mathcal V}a_{ij}\left[2x_j(t)+\alpha(1+x_j(t))\right]\\
   &+2(1-\lambda_i)\lambda_i\beta_i\sum\nolimits_{j \in \mathcal V}w_{ij}y_j(t).
 \end{aligned} 
\end{equation} 
If $i\in\mathcal V$ is activated at time $t$, the state is updated to
\begin{subequations}\label{update_rule_coo_xy}
\begin{align}
    x_i(t+1)&=\mathcal{S}(\delta_i(\vect{z}(t)))\label{action_update}\\ 
    y_i(t+1)&=(1-\lambda_i)\sum\nolimits_{j\in\mathcal{V}}\hspace{-.2cm}w_{ij}y_j(t)\label{opinion_update}
            +\lambda_i\mathcal{S}(\delta_i(\vect{z}(t))),
\end{align}
\end{subequations}
where 
\begin{equation}
     \mathcal{S}(\delta_i(\vect{z}(t)))=
     \begin{cases}
     +1 & \text{if} \ \delta_i(\vect{z}(t))>0,\\    
     x_i(t) & \text{if} \ \delta_i(\vect{z}(t))=0,\\
     -1 & \text{if} \ \delta_i(\vect{z}(t))<0.
     \end{cases}
\end{equation}
\end{proposition}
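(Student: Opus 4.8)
The plan is to compute the joint maximizer of the payoff \eqref{def:payoff_func_coo_anti} over $z_i=(x_i,y_i)\in\{-1,+1\}\times[-1,1]$ in two stages: first optimize the opinion $y_i$ for each fixed value of the action $x_i$, and then compare the two resulting payoffs to select the optimal action. Collecting the terms of $f_i(z_i,\vect{z_{-i}})$ that involve $y_i$ gives, for $\beta_i>0$, the strictly concave quadratic $g_i(y_i)=-\tfrac12(1-\lambda_i)\beta_i\sum_{j}w_{ij}(y_i-y_j)^2-\tfrac12\lambda_i\beta_i(y_i-x_i)^2$. Setting $g_i'(y_i)=0$ and using that $W$ is row-stochastic ($\sum_j w_{ij}=1$) yields the unconstrained maximizer $y_i^\star(x_i)=(1-\lambda_i)\sum_j w_{ij}y_j+\lambda_i x_i$; since this is a convex combination of the numbers $y_j\in[-1,1]$ and $x_i\in\{-1,+1\}$, it already lies in $[-1,1]$, hence it is the constrained maximizer, and it is unique. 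The boundary cases $\lambda_i\in\{0,1\}$ are recovered directly from this formula, while $\beta_i=0$ (where $y_i$ does not enter $f_i$) is covered by the selection convention stated for the update rule, cf.\ \cite{aghbolagh2023coevolutionary}.

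Next I substitute $y_i^\star(x_i)$ back into \eqref{def:payoff_func_coo_anti} to obtain a reduced payoff $h_i(x_i)$ that depends only on the binary action, and compute $h_i(+1)-h_i(-1)$. Expanding the coordination/anti-coordination term in powers of $x_i$, it is affine in $x_i$ with coefficient of $x_i$ equal to $\tfrac{\varepsilon_i\lambda_i(1-\beta_i)}{4}\sum_j a_{ij}[2x_j+\alpha(1+x_j)]$, so this term contributes twice that coefficient to the difference, i.e.\ exactly the first summand of $\delta_i(\vect z(t))$ in \eqref{delta_i(z)}. For the opinion part, I evaluate $g_i$ at $y_i^\star(x_i)$: using $x_i^2=1$ and discarding every summand independent of $x_i$ (all of which cancel in the difference), the only surviving term is $\beta_i\lambda_i(1-\lambda_i)\,x_i\sum_j w_{ij}y_j$, whose contribution to $h_i(+1)-h_i(-1)$ is $2(1-\lambda_i)\lambda_i\beta_i\sum_j w_{ij}y_j$, the second summand of $\delta_i(\vect z(t))$. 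Hence $h_i(+1)-h_i(-1)=\delta_i(\vect z(t))$.

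The conclusion is then immediate: action $+1$ is a best response iff $\delta_i(\vect z(t))\geq0$, action $-1$ is a best response iff $\delta_i(\vect z(t))\leq0$, and when $\delta_i(\vect z(t))=0$ both are optimal, so the convention of retaining $x_i(t)$ gives $x_i(t+1)=\mathcal S(\delta_i(\vect z(t)))$, which is \eqref{action_update}. Substituting $x_i(t+1)$ into $y_i^\star(\cdot)$ yields $y_i(t+1)=(1-\lambda_i)\sum_j w_{ij}y_j(t)+\lambda_i\mathcal S(\delta_i(\vect z(t)))$, which is \eqref{opinion_update}; for inactive agents there is nothing to prove.

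I expect the only genuine work to be the algebra in the second step: one must expand the two quadratic opinion terms at $y_i^\star(x_i)$, verify that every cross term and square not carrying a factor of $x_i$ cancels in $h_i(+1)-h_i(-1)$, and match the surviving coefficients against \eqref{delta_i(z)}. The first and third steps are essentially structural; a minor additional point, already settled in \cite{aghbolagh2023coevolutionary}, is the uniqueness of the opinion maximizer for $\beta_i>0$, which is precisely what makes the piecewise selection $\mathcal S$ well defined.
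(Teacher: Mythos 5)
Your proposal is correct, and the two-stage argument (maximize over $y_i$ for each fixed $x_i$ using strict concavity and row-stochasticity of $W$, then compare the two reduced payoffs to obtain $h_i(+1)-h_i(-1)=\delta_i(\vect z(t))$) is exactly the computation that the paper delegates to the citation of the $\varepsilon_i=+1$ case in the earlier reference, here carried out uniformly for both signs of $\varepsilon_i$. The algebra checks out: the unconstrained stationary point $(1-\lambda_i)\sum_j w_{ij}y_j+\lambda_i x_i$ is indeed feasible as a convex combination, and the $x_i$-linear coefficients you extract match the two summands of \eqref{delta_i(z)} after doubling.
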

\begin{proof}
    Case $\varepsilon=+1$ is proved in \cite{aghbolagh2023coevolutionary}. The proof for $\varepsilon=-1$ follows the same arguments, and is thus omitted.
\end{proof}
Next, we formalize two standing assumptions that will be enforced for the rest of the paper. First, we assume that the population is homogeneous, second, we pose some mild constraints on the activation sequence, which are quite standard in the related literature~\cite{xiao2006state,zino2020two}.

\begin{assumption}[Homogeneous population]\label{as:uniform}
The parameters are homogeneous, i.e., for all $i\in\mathcal V$, there holds  $\lambda_i=\lambda \in (0,1),\beta_i=\beta\in(0,1)$, and $\varepsilon_i=\varepsilon\in\{-1,+1\}$. Moreover, 
the two graphs are undirected, i.e., $W=W^{\top}$ and $A=A^{\top}$.
\end{assumption}

\begin{assumption}[Activation sequence]\label{as:activation}
There exists a finite positive integer $T$ such that, for any $t \geq 0$, every agent activates at least once in the interval $t,t+1,\ldots,t+T-1$.
\end{assumption}

Under Assumption~\ref{as:uniform}, all agents have the same $\varepsilon$. Hence, we can identify two scenarios of interest: i) Networks of \emph{coordinating agents}, where $\varepsilon=+1$; and ii) Networks of \emph{anti-coordinating agents}, where $\varepsilon=-1$. In the rest of this paper, we present results on these two scenarios. First, in Section~\ref{sec: results_coo}, we consider a population of coordinating agents, where \cite{aghbolagh2023coevolutionary} guarantees convergence to an equilibrium, and we focus on characterizing conditions under which the system converges to a consensus. Second, in Section~\ref{sec: results_anti_coord}, we consider a population of anti-coordinating agents and, after proving convergence to an equilibrium, we investigate consensus and polarized equilibria and their basins of attraction.

\section{Results for Coordinating Agents}\label{sec: results_coo}
In the presence of only coordinating agents $(\varepsilon_i=+1,\,\forall\,i\in\mathcal V)$, the  results from~\cite{aghbolagh2023coevolutionary} guarantee that the coevolutionary dynamics under Assumptions~\ref{as:uniform}--\ref{as:activation} converges to a NE. However, in general, multiple equilibria may be present. In~\cite{aghbolagh2023coevolutionary}, some conditions have been established under which polarized equilibria exist and are stable. Here, instead, we focus on consensus equilibria, which are defined as follows.

\begin{definition}[Consensus configuration]\label{def_NE_consensus}
   We say that the vector $\vect{z}(t)=[\vect{x}(t), \vect{y}(t)]$ is a  consensus configuration at time $t$ if $\vect{x}(t)= \vect{y}(t)=\vect{1}_n$ or $\vect{x}(t)=\vect{y}(t)=-\vect{1}_n.$ 
\end{definition}

We now prove that consensus configurations are always NEs of the game when all players are coordinating agents.  

\begin{theorem}\label{thm:NE_consesus}
Consider the coevolutionary model in \eqref{update_rule_coo_xy}.
Suppose that Assumption~\ref{as:uniform}  holds and $\varepsilon=1$.
Then the consensus configurations $\vect{z}^{*}= [\vect{1}_n,\vect{1}_n]$ and $[-\vect{1}_{n},-\vect{1}_{n}]$
are NE.
\end{theorem}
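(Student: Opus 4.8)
The plan is to verify the Nash equilibrium condition directly using Lemma~\ref{lemma_Nash_equ}, that is, to show that at each of the two candidate configurations every agent is already playing a best response. Since the update rule in Proposition~\ref{prop:dyn} gives the (inertia-selected) best response in closed form, it suffices to evaluate the discriminant $\delta_i(\vect z^*)$ and the opinion best-response map at $\vect z^* = [\vect 1_n, \vect 1_n]$ (and symmetrically at $[-\vect 1_n, -\vect 1_n]$) and check that applying the update leaves the state unchanged. Concretely, for $\vect z^* = [\vect 1_n, \vect 1_n]$ I would plug $x_j = y_j = 1$ for all $j$ into \eqref{delta_i(z)}: the action term becomes $\frac{\varepsilon_i \lambda_i(1-\beta_i)}{2}\sum_j a_{ij}[2 + 2\alpha] = \lambda(1-\beta)(1+\alpha)\sum_j a_{ij}$, and the opinion term becomes $2(1-\lambda)\lambda\beta\sum_j w_{ij} = 2(1-\lambda)\lambda\beta$, using row-stochasticity of $W$. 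Both terms are nonnegative (strictly positive under the standing assumption $\lambda,\beta\in(0,1)$ and $\alpha\ge 0$, noting $\sum_j a_{ij} = 1$ by row-stochasticity of $A$), so $\delta_i(\vect z^*) > 0$, hence $\mathcal S(\delta_i(\vect z^*)) = +1 = x_i(t)$.

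Next I would check the opinion coordinate. From \eqref{opinion_update}, $y_i(t+1) = (1-\lambda)\sum_j w_{ij} y_j + \lambda \mathcal S(\delta_i) = (1-\lambda)\cdot 1 + \lambda \cdot 1 = 1 = y_i$, again using $W\vect 1_n = \vect 1_n$. So an activated agent at $\vect z^*$ reproduces exactly $(x_i, y_i) = (1,1)$, and inactive agents trivially keep their state; thus $\vect z^*$ is a fixed point of the dynamics. I would then argue that a fixed point of the best-response dynamics is a NE: since $\beta_i = \beta > 0$, the best-response set projected onto the action coordinate is either a singleton or, in the degenerate $\delta_i = 0$ case, contains both actions with the inertia rule selecting $x_i(t)$; in our case $\delta_i > 0$ so the best response is the unique maximizer, and the opinion best response \eqref{eq_best_response_opinion}-type term is a genuine unconstrained maximizer of the concave-in-$y_i$ payoff — hence $z_i^* \in \mathcal B_i(f_i(\cdot, \vect z^*_{-i}))$ for every $i$, and Lemma~\ref{lemma_Nash_equ} gives the conclusion. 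The case $\vect z^* = [-\vect 1_n, -\vect 1_n]$ is identical after replacing $x_j = y_j = 1$ by $x_j = y_j = -1$: the action term of $\delta_i$ becomes $\frac{\lambda(1-\beta)}{2}\sum_j a_{ij}[-2 + 0] = -\lambda(1-\beta) < 0$, the opinion term becomes $-2(1-\lambda)\lambda\beta < 0$, so $\delta_i(\vect z^*) < 0$ and $\mathcal S(\delta_i) = -1$, and the opinion update returns $-1$.

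I do not anticipate a serious obstacle here — the statement is essentially a substitution check — but the one point requiring care is the logical passage from "fixed point of the inertia-selected best-response update" to "Nash equilibrium." These are not automatically equivalent because the inertia convention can mask a situation where $x_i(t)$ is selected merely by tie-breaking while the true best response set is $\{-1,+1\}$; one must confirm that at the consensus configurations the discriminant is \emph{strictly} signed (which it is, thanks to $\alpha \ge 0$, $\lambda,\beta \in (0,1)$, and row-stochasticity forcing $\sum_j a_{ij} = \sum_j w_{ij} = 1 > 0$, so no row is identically zero) so that the action best response is genuinely unique and coincides with the consensus action. A secondary minor point is to note that the joint maximization over $(x_i, y_i)$ in \eqref{def:payoff_func_coo_anti} does decouple in the way Proposition~\ref{prop:dyn} asserts, but since that proposition is already established I can invoke it directly. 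With these observations in place the proof is a short computation.
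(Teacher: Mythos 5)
Your proof is correct, and it follows the same basic strategy as the paper's (verify via Lemma~\ref{lemma_Nash_equ} that each agent is already playing a best response at the consensus state), but the execution differs in a way worth noting. The paper re-derives the best response from scratch: it writes out $f_i(z_i,[\vect{1}_{n-1},\vect{1}_{n-1}])$, splits into the cases $x_i=+1$ and $x_i=-1$, maximizes over $y_i$ in each case by setting $\partial f_i/\partial y_i=0$, and then concludes that the joint maximizer is the consensus strategy. You instead invoke Proposition~\ref{prop:dyn} and simply evaluate the discriminant $\delta_i(\vect z^*)$ and the opinion update at the candidate state. Your route is shorter and, arguably, tighter on one point: the sign of $\delta_i$ is precisely the comparison between the two conditional maxima $(+1,y^{+})$ and $(-1,y^{-})$, a comparison the paper's proof leaves implicit when it ends with ``the best response is $[1,1]$ or $[-1,-1]$, depending on $x_i$'' without explicitly checking which of the two wins. (Incidentally, in the paper's case $x_i=-1$ the coordination term actually evaluates to $0$ rather than $-\lambda(1-\beta)\sum_j a_{ij}$, since $(1-x_i)(1-x_j)=(1+x_j)(1+x_i)=0$ when $x_j=+1$ and $x_i=-1$; this slip does not affect the conclusion.) Your explicit handling of the tie-breaking convention --- confirming that $\delta_i$ is strictly signed so the inertia rule never masks an indifference --- is a genuine point of care that the paper does not spell out. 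The only mild caveat is that your argument leans on Proposition~\ref{prop:dyn} being a full characterization of $\mathcal B_i$ (so that a fixed point of the update is a NE); since that proposition is established in the paper (with proof delegated to the cited reference), this reliance is legitimate.
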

\begin{proof}The proof is reported in Appendix~\ref{app:thm:NE_consesus}. 
\end{proof}

In the following, we establish conditions under which the dynamics converge to such a consensus configuration.

\subsection{Basin of Attraction of Consensus Configuration}
Given a time instant $t\in \mathbb Z_+$, and $\vect z(t) = (\vect x(t), \vect y(t))$, we define the following mutually exclusive sets
\begin{align*}
    \mathcal{V}_{1}^{+}(t) & :=\{j:\sgn(y_j(t))\geq 0, x_j(t)=+1\}, \\
    \mathcal{V}_{-1}^{+}(t)& :=\{j:\sgn(y_j(t))>0, x_j(t)=-1\}, \\
    \mathcal{V}_{1}^{-}(t)& :=\{j:\sgn(y_j(t))<0, x_j(t)=+1\},\\
    \mathcal{V}_{-1}^{-}(t)& :=\{j:\sgn(y_j(t))\leq 0, x_j(t)=-1\}.
\end{align*}
with $\sgn(0) = 0$.
Intuitively, $\mathcal{V}_{1}^{+}(t)$ and $\mathcal{V}_{-1}^{-}(t)$ are the sets of agents whose actions and opinions are aligned (the agent's opinion indicates a preference for their current action). On the contrary, $\mathcal{V}_{-1}^{+}(t)$ and $\mathcal{V}_{1}^{-}(t)$ are the sets of agents who are misaligned, i.e., the sign of their opinion differs from their action. Before presenting our results, we report two graph-theoretic definitions.

\begin{definition}[Cohesive set]\label{def:cohesive set}
Let $\mathcal{G}[M]=(\mathcal{V},\mathcal{E}_M,M)$ be a graph with associated adjacency matrix $M$. Then, for a given constant $q\in (0,1)$, a set $S\subset$ $\mathcal{V}$ is called $q$-cohesive if 
\begin{equation}
\frac{\sum_{j \in S} m_{ij}}{\sum_{j \in \mathcal{V}}m_{ij}}\geq q,\;\forall\,i \in S.
\end{equation}
Conversely, a set $S\subset$ $\mathcal{V}$ is called $q$-diffusive if 
\begin{equation}
\frac{\sum_{j \in S} m_{ij}}{\sum_{j \in \mathcal{V}}m_{ij}}<q,\;\forall\,i \in S.
\end{equation}
\end{definition}

\begin{theorem}\label{thm:basins of attraction_vp_vnp}
Consider the coevolutionary model in \eqref{update_rule_coo_xy}.
Suppose that Assumptions~\ref{as:uniform}--\ref{as:activation} hold and $\varepsilon=1$, and that the initial condition satisfies $\vect{z}(0)\in\{-1,1\}^n\times(0,+1]^n$. If the set $\mathcal{V}_{1}^{+}(0)$ is $\frac{1}{\alpha+2}$-cohesive and $\mathcal{V}_{-1}^{+}(0)$ is $\frac{\alpha+1}{\alpha+2}$-diffusive, 
then $\lim_{t\to\infty}\vect{z}(t)=[\vect{1}_n,\vect{1}_n]$.
\end{theorem}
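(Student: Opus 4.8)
The plan is to show that, under the stated cohesive/diffusive conditions, the set $\mathcal{V}_1^+(t)$ can only grow while $\mathcal{V}_{-1}^+(t)$ can only shrink, and that once every agent has $x_i=+1$ the opinions converge to $\vect 1_n$ via the averaging dynamics. First I would observe that with $\varepsilon=1$ and the initial condition $\vect y(0)\in(0,+1]^n$, we have $\mathcal{V}_1^-(0)=\mathcal{V}_{-1}^-(0)=\emptyset$, so the population splits into just $\mathcal{V}_1^+(0)$ (aligned on $+1$) and $\mathcal{V}_{-1}^+(0)$ (holding action $-1$ but with positive opinion). The key point is to analyze the discriminant $\delta_i(\vect z(t))$ in \eqref{delta_i(z)}. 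For an agent $i$, the opinion part $2(1-\lambda)\lambda\beta\sum_j w_{ij}y_j(t)$ is nonnegative as long as all opinions stay nonnegative, which I would maintain as an invariant: if $y_j(t)\ge 0$ for all $j$, then by \eqref{opinion_update} any activated agent gets $y_i(t+1)=(1-\lambda)\sum_j w_{ij}y_j(t)+\lambda\mathcal{S}(\delta_i)$, and I must check $\mathcal{S}(\delta_i)$ cannot be forced to $-1$ for an aligned agent — this is exactly where cohesiveness enters.

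The heart of the argument is a monotonicity/invariance claim: $\mathcal{V}_1^+(t)\subseteq\mathcal{V}_1^+(t+1)$ and $\mathcal{V}_{-1}^+(t+1)\subseteq\mathcal{V}_{-1}^+(t)$, i.e., no agent ever leaves the "$+1$ action" camp once aligned, and agents can only move from action $-1$ to action $+1$. To prove $\mathcal{V}_1^+$ is invariant, take $i\in\mathcal{V}_1^+(t)$ that is activated; I need $\delta_i(\vect z(t))\ge 0$ so that $x_i(t+1)=+1$ (using the tie-breaking rule that keeps $x_i$ when $\delta_i=0$). Since opinions are nonnegative, the second term of $\delta_i$ is $\ge 0$; for the first (action-coordination) term, the worst case is when the neighbors outside $\mathcal{V}_1^+(t)$ all play $-1$, and the bracket $[2x_j+\alpha(1+x_j)]$ equals $2+2\alpha$ for $x_j=+1$ and $-2$ for $x_j=-1$. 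Writing $\rho_i:=\sum_{j\in \mathcal{V}_1^+(t)}a_{ij}/\sum_{j\in\mathcal{V}}a_{ij}$, the action term is lower-bounded by a positive multiple of $(2+2\alpha)\rho_i-2(1-\rho_i)=(4+2\alpha)\rho_i-2$, which is $\ge 0$ precisely when $\rho_i\ge \frac{1}{\alpha+2}$ — the cohesiveness hypothesis (noting $\mathcal{V}_1^+(t)\supseteq\mathcal{V}_1^+(0)$ is $\frac{1}{\alpha+2}$-cohesive, and a superset of a cohesive set with the same row sums inherits the fraction bound, which I should verify carefully). Symmetrically, for $i\in\mathcal{V}_{-1}^+(t)$ activated, I want to show the only possible transitions are to stay at $-1$ or move to $+1$ (it cannot flip to misaligned in a bad way because its opinion is positive), and that the complement set stays $\frac{\alpha+1}{\alpha+2}$-diffusive as $\mathcal{V}_1^+$ grows — diffusiveness of a set being preserved when we remove vertices, with row sums in $A$ unchanged.

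Once monotonicity is in place, I would argue that eventually every agent's action becomes $+1$. Using Assumption~\ref{as:activation} (every agent activates within any window of length $T$): as long as some agent is still in $\mathcal{V}_{-1}^+$, I want to show at least one such agent flips to $+1$ within bounded time. An agent $i\in\mathcal{V}_{-1}^+(t)$ has $y_i(t)>0$; once all neighbors' opinions accumulated in the averaging term are bounded away from zero (which follows because the sole input to the opinion recursion on a nonnegative, nonzero initial profile keeps opinions strictly positive for agents reachable from the positive mass, using connectivity — here I may need the graph $\mathcal{G}[W]$ to be connected, which should be implied or assumed; if not, restrict to the relevant component), the term $2(1-\lambda)\lambda\beta\sum_j w_{ij}y_j$ is strictly positive, and combined with the diffusive bound showing the action term is not too negative, eventually $\delta_i>0$ and $i$ flips. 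Since $\mathcal{V}_1^+$ is finite and strictly grows, after finitely many such events $\mathcal{V}_1^+(t)=\mathcal{V}$. From that time on, every activated agent sets $x_i=+1$ and $y_i(t+1)=(1-\lambda)\sum_j w_{ij}y_j(t)+\lambda$; this is a (partially asynchronous, row-stochastic) affine iteration with fixed point $\vect 1_n$, and standard arguments for asynchronous averaging under Assumption~\ref{as:activation} (e.g. \cite{xiao2006state}) give $\vect y(t)\to\vect 1_n$. Hence $\vect z(t)\to[\vect 1_n,\vect 1_n]$.

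The main obstacle I anticipate is the bookkeeping around the opinion term in $\delta_i$ when opinions are not uniformly bounded below — specifically, ruling out the degenerate possibility that an agent's neighborhood-averaged opinion decays to $0$ before its action flips, which could stall progress. Handling this cleanly likely requires either a connectivity assumption on $\mathcal{G}[W]$ (so positive opinions propagate and stay uniformly positive on a sublevel argument) or a more careful coupled induction tracking both the minimum opinion and the action sets together; getting the quantitative lower bound on $\sum_j w_{ij}y_j(t)$ right over the bounded activation window is the delicate step. The second subtlety is verifying that the cohesive/diffusive properties of the initial sets transfer to the growing/shrinking sets at all later times — this should follow from Definition~\ref{def:cohesive set} since the denominators $\sum_{j\in\mathcal{V}}m_{ij}$ never change, but it needs to be stated explicitly as a lemma.
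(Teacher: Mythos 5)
Your overall strategy---induction maintaining nonnegative opinions, monotone actions, and sign conditions on the discriminant $\delta_i$ derived from cohesiveness/diffusiveness---is the same as the paper's, and your computation that the action term is nonnegative precisely when the neighbour fraction in $\mathcal{V}_1^+$ is at least $\tfrac{1}{\alpha+2}$ is exactly the bound used in Appendix~B. The two points you flag as delicate, however, are handled more simply in the paper, and one of them is not an obstacle at all. First, rather than tracking the time-varying sets $\mathcal{V}_1^+(t)$, $\mathcal{V}_{-1}^+(t)$ and proving that cohesiveness/diffusiveness transfer to them, the paper keeps the sets frozen at $t=0$ and exploits the entrywise monotonicity $x_j(t)\geq x_j(0)$ together with the fact that the bracket $2x_j+\alpha(1+x_j)$ is increasing in $x_j$: this gives $\delta_i(\vect z(t))$ bounded below by the same expression evaluated at the initial actions, so the $t=0$ estimate propagates verbatim. (Your transfer argument can be made to work---a new member $i$ of the superset came from $\mathcal{V}_{-1}^+(0)$, and diffusiveness of that set plus row-stochasticity of $A$ gives $\sum_{j\in\mathcal{V}_1^+(0)}a_{ij}>\tfrac{1}{\alpha+2}$ for such $i$---but the frozen-set bookkeeping avoids the issue entirely.) Second, your anticipated ``main obstacle'' of opinions decaying to zero before actions flip does not arise: because the diffusiveness condition is a \emph{strict} inequality, the paper shows $\delta_i(\vect z(t))>0$ for every $i\in\mathcal{V}_{-1}^+(0)$ at every time, using only the action term and the nonnegativity (not positivity) of the opinion term. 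Hence each such agent flips to $+1$ the first time it activates, so $\vect x(t)=\vect 1_n$ for all $t\geq T$ by Assumption~\ref{as:activation}; no connectivity of $\mathcal{G}[W]$ and no quantitative lower bound on $\sum_j w_{ij}y_j(t)$ are needed. For the final step the paper invokes the fact that $[\vect 1_n,\vect 1_n]$ is the unique NE with $\vect x=\vect 1_n$ (citing an external result), whereas you argue directly that $y_i\mapsto(1-\lambda)\sum_j w_{ij}y_j+\lambda$ is an asynchronous contraction toward $\vect 1_n$; both are valid, and your contraction argument likewise needs no connectivity since the factor $1-\lambda<1$ is uniform.
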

\begin{proof}
The proof is reported in Appendix~\ref{app:thm:basins of attraction_vp_vnp}.
\end{proof}

\begin{remark}
When $\alpha =0$, the condition reduces to requiring $\mathcal{V}_{1}^{+}(0)$ to be $\frac{1}{2}$-cohesive, and the complementary set to be $\frac{1}{2}$-diffusive. 
On the opposite limit, $\alpha \rightarrow \infty$, the condition on $\mathcal V_1^+(0)$ intuitively reduces to requiring non-zero cohesiveness. 
\end{remark}

We introduce the following assumptions to facilitate our analysis of some (but not all) of the results in our paper.
\begin{assumption}\label{as:polarization}
We assume that  $A=W$ and $\alpha=0.$
\end{assumption}

Under Assumption~\ref{as:polarization}, the action and opinion update rules are simplified, because \eqref{delta_i(z)} reduces to
\begin{equation}\label{eq:delta_ass1_4_coo}
    \delta_i(\vect{z(t)})=2\lambda\beta(1-\lambda)\sum_{j\in \gV}w_{ij}y_j(t)+(1-\beta)\lambda\sum_{j\in\gV} w_{ij}x_j(t).
\end{equation}

We now explore the conditions that lead two initially polarized groups to reach a consensus equilibrium. To begin, we introduce the following definition.

\begin{definition}[Polarized configuration]\label{def_polar}
    We say that a vector $\vect{z}(t)$ is a polarized configuration at time $t$ if $\mathcal{V}_{1}^{+}(t) \cup \mathcal{V}_{-1}^{-}(t) = \mathcal{V}$, and both $\mathcal{V}_{1}^{+}(t) \neq \emptyset$ and $\mathcal{V}_{-1}^{-}(t) \neq \emptyset$. Specifically, we say that $\vect{z}(t)$ is polarized with respect to the partition $(\mathcal{V}_{1}^{+}(t),\mathcal{V}_{-1}^{-}(t))$. 
\end{definition}
In other words, in polarized configurations, the opinions and actions of every agent are aligned, but they are not at a consensus. Note that in \cite{aghbolagh2023coevolutionary}, a result is established for when a polarized initial state leads to convergence to a polarized equilibrium, whereas here, we identify conditions for when instead there is convergence to a consensus equilibrium.

\begin{figure}
 \centering
 \subfloat[\label{fig_heatmap_energy}]{\includegraphics[width=\linewidth]{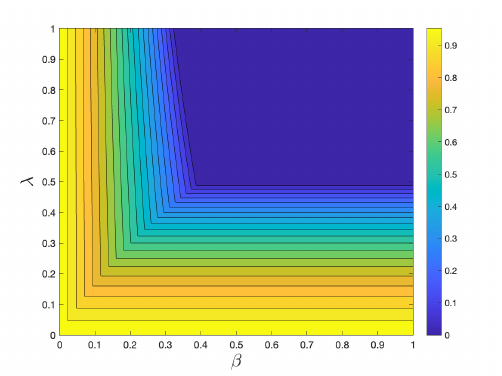}}\\
 \vspace{-0.3cm}
 \subfloat[\label{fig_heatmap_ins}]{\includegraphics[width=\linewidth]{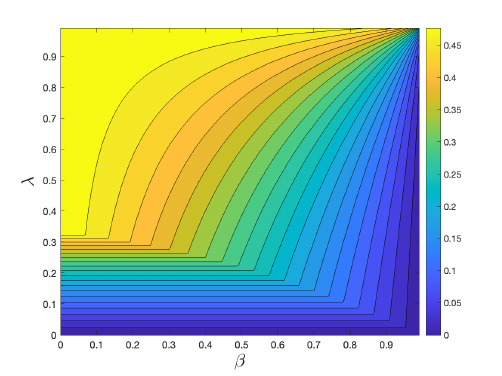}} 
  \caption{Contour plots of the conditions in \eqref{eq:condition_consensus_VnVp_coo_thm} from Theorem~\ref{thm:basins of attraction_vnvp_consensus}. Panel (a): $f(\lambda, \beta)$ corresponds to the right-hand-side (RHS) of~\eqref{eq:condition_consensus_VnVp_coo_thm_vp}.
  Panel (b): $f(\lambda, \beta)$ corresponds to the RHS of~\eqref{eq:condition_consensus_VnVp_coo_thm_vn}.}
 \label{fig:thm3_contour} 
\end{figure}

\begin{theorem}\label{thm:basins of attraction_vnvp_consensus}
Consider the coevolutionary model in \eqref{update_rule_coo_xy}.
Suppose that Assumptions~\ref{as:uniform}--\ref{as:polarization} hold and $\varepsilon=1$. Suppose $\vect{z}(0)$ is polarized, with respect to a partition $(\mathcal{V}_{1}^{+}(0),\mathcal{V}_{-1}^{-}(0))=:(\mathcal{V}_p,\mathcal{V}_n)$. If the below inequalities are satisfied,
\begin{subequations}\label{eq:condition_consensus_VnVp_coo_thm}
\begin{align}
    \sum\nolimits_{j\in\mathcal{V}_{1}^{+}(0)}w_{ij}&>\max\left\{\frac{1-2\lambda}{1-\lambda},1-\frac{1-\beta}{2(1-\lambda\beta)}\right\},  i\in\mathcal{V}_p\label{eq:condition_consensus_VnVp_coo_thm_vp},\\
\sum\nolimits_{j\in\mathcal{V}_{-1}^{-}(0)}\hspace{-.1cm}w_{ij}\hspace{-.07cm}&< \hspace{-.07cm}\min\left\{\frac{\lambda}{1-\lambda}, \frac{1-\beta}{2(1-\lambda\beta)}\right\},\,   i\in\mathcal{V}_n,\label{eq:condition_consensus_VnVp_coo_thm_vn}
\end{align}
\end{subequations}
Then, it holds $\lim_{t\to\infty}\vect{z}(t)=[\vect{1}_n,\vect{1}_n]$.
\end{theorem}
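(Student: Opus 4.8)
The plan is to show that the polarized initial configuration, under the stated hypotheses, flows monotonically towards the all-$+1$ consensus; once we establish that the actions and opinions of agents in $\mathcal V_n$ eventually flip to the $+1$ side (and those in $\mathcal V_p$ never flip away), Theorem~\ref{thm:NE_consesus} identifies $[\vect 1_n,\vect 1_n]$ as the limiting NE. Concretely, I would first use \eqref{eq:delta_ass1_4_coo} to compute the discriminant $\delta_i(\vect z(0))$ at each agent: for $i\in\mathcal V_p$ we have $x_i(0)=+1$, $y_i(0)>0$, and the neighbour mass on $\mathcal V_p$ is $\sum_{j\in\mathcal V_p}w_{ij}$; since $A=W$ is row-stochastic the neighbour mass on $\mathcal V_n$ is $1-\sum_{j\in\mathcal V_p}w_{ij}$. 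Plugging in the worst case $y_j=+1$ for $j\in\mathcal V_p$, $y_j=-1$ for $j\in\mathcal V_n$ (so we underestimate $\delta_i$), the condition $\delta_i(\vect z(0))>0$ reduces, after collecting the $2\lambda\beta(1-\lambda)$ and $(1-\beta)\lambda$ coefficients, to a lower bound on $\sum_{j\in\mathcal V_p}w_{ij}$; this is exactly where the two expressions inside the $\max\{\cdot,\cdot\}$ in~\eqref{eq:condition_consensus_VnVp_coo_thm_vp} come from (one term handles the opinion part of $\delta_i$ once $x$ has homogenized, the other the mixed action/opinion balance). Symmetrically, for $i\in\mathcal V_n$ the requirement that such an agent's discriminant is pushed positive (i.e. $\delta_i>0$ so the action flips to $+1$) — using the \emph{best} case bounds for the $\mathcal V_n$ neighbours' opinions, which are in $[-1,0]$, and $y_j\le 1$ for $\mathcal V_p$ neighbours — yields the upper bound~\eqref{eq:condition_consensus_VnVp_coo_thm_vn} on $\sum_{j\in\mathcal V_n}w_{ij}$.

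Next I would set up the induction that makes these one-step computations into a convergence statement. The key invariant to propagate is: for all $t\ge 0$, $x_i(t)=+1$ and $y_i(t)>0$ for $i\in\mathcal V_p$, while the opinions $y_j(t)$ for $j\in\mathcal V_n$ form a nondecreasing-in-a-suitable-sense quantity that is eventually driven strictly positive. For the $\mathcal V_p$ agents the argument is a closed induction: assuming at time $t$ that every $\mathcal V_p$ agent has $x=+1,\,y>0$ and every $\mathcal V_n$ agent has $y\le$ (something $<$ its eventual value), the bound~\eqref{eq:condition_consensus_VnVp_coo_thm_vp} — in particular the term $\tfrac{1-2\lambda}{1-\lambda}$, chosen precisely so that the opinion update~\eqref{opinion_update} $y_i(t+1)=(1-\lambda)\sum_j w_{ij}y_j(t)+\lambda\,\mathcal S(\delta_i)$ stays positive even when the $\mathcal V_n$ neighbours contribute $-1$ — guarantees $x_i(t+1)=+1$ and $y_i(t+1)>0$; here one also invokes the social-inertia tie-breaking (if $\delta_i=0$, $x_i$ is retained, which is $+1$). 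For the $\mathcal V_n$ agents, I would track $m(t):=\min_{j\in\mathcal V_n} y_j(t)$ (or a weighted variant) and argue that whenever a $\mathcal V_n$ agent activates its action becomes $+1$ by~\eqref{eq:condition_consensus_VnVp_coo_thm_vn}, after which its opinion update is a convex combination of neighbour opinions and $+\lambda$, so it is pushed towards $1$; Assumption~\ref{as:activation} ensures every agent activates within each window of length $T$, so after finitely many windows all actions are $+1$ and thereafter the opinion dynamics is a standard row-stochastic averaging with a positive forcing term $\lambda\vect 1$, which drives $\vect y\to\vect 1_n$. Finally, since $\vect x\equiv\vect 1_n$ and $\vect y\to\vect 1_n$, the discriminants stay strictly positive, $[\vect 1_n,\vect 1_n]$ is reached as the unique fixed point, and it is a NE by Theorem~\ref{thm:NE_consesus}.

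\textbf{Main obstacle.} The delicate part is the joint/asynchronous bookkeeping: because activations are asynchronous, at an intermediate time the population is a mixture of agents with actions already flipped to $+1$ and agents still at $-1$, and one must show the relevant discriminant inequalities are \emph{preserved} along the way — i.e. that an agent's neighbour mass on the "correct" side only ever helps, never hurts, as more agents flip. This requires a carefully chosen monotone quantity (the pair $\big(\min_{\mathcal V_p}y_i(t),\,\min_{\mathcal V_n}y_j(t)\big)$, or the number of $-1$ actions remaining) and a verification that the cohesiveness-type bounds~\eqref{eq:condition_consensus_VnVp_coo_thm}, which are phrased only at $t=0$ in terms of the fixed partition, continue to certify the sign of $\delta_i(\vect z(t))$ for all $t$. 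The second, more routine but still nontrivial, point is disentangling the two competing terms inside each $\max$/$\min$: showing that the opinion-consistency coefficient and the action-influence coefficient in~\eqref{eq:delta_ass1_4_coo} combine so that \emph{both} bounds are genuinely needed (one dominates for small $\lambda$, the other for $\lambda$ near $1$), which is the content of the contour plots in Figure~\ref{fig:thm3_contour}.
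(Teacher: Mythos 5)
Your proposal matches the paper's proof in all essentials: the worst-case bounding of $\delta_i(\vect z(0))$ over the two neighbour masses (yielding the $1-\tfrac{1-\beta}{2(1-\lambda\beta)}$ and $\tfrac{1-\beta}{2(1-\lambda\beta)}$ thresholds from the action discriminant, and $\tfrac{1-2\lambda}{1-\lambda}$, $\tfrac{\lambda}{1-\lambda}$ from keeping the updated opinions positive), followed by a (strong) induction propagating the sign of $\delta_i$, the monotonicity $\vect x(t+1)\geq\vect x(t)$, and the positivity of updated opinions, with Assumption~\ref{as:activation} forcing all actions to $+1$ in finite time and convergence of $\vect y$ to $\vect 1_n$ thereafter. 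The "main obstacle" you flag is resolved exactly as you suggest — the action monotonicity and the $[-1,1]$ opinion bounds make the $t=0$ estimates valid for all $t$ — so no further machinery is needed.
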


\begin{proof}
The proof is reported in Appendix~\ref{app:thm:basins of attraction_vnvp_consensus}.     
\end{proof}

\begin{remark}
If \eqref{eq:condition_consensus_VnVp_coo_thm_vp} is satisfied, then $\mathcal{V}_p$ is $\frac{1}{2}-$cohesive
since the quantities $\frac{1-2\lambda}{1-\lambda}$ and $1-\frac{1-\beta}{2(1-\lambda\beta)}$ are always greater than $\frac{1}{2}.$
\end{remark}

We now present contour plots to explain the inequalities in \eqref{eq:condition_consensus_VnVp_coo_thm} more intuitively.
In this paper, we make repeated use of similar contour plots, and thus we first clarify the horizontal and vertical axes, as well as the colors used in these contour plots, which will remain consistent across all sections. The 2D contour plots help to visualize the relationship between $\beta$ and $\lambda$ (contributions in the decision-making model) and a function $f(\lambda,\beta)$ that arises for various inequalities identified in theorems. The X-axis plots $\beta$, the parameter that represents the contribution of the opinion in determining an agent's payoff, while the Y-axis plots $\lambda$, which represents the contribution of the action in determining an agent's payoff. The contour colors range from blue to orange, corresponding to $f(\lambda,\beta)$ taking the smallest to largest value, respectively. Fig.~\ref{fig:thm3_contour} presents the two inequalities in \eqref{eq:condition_consensus_VnVp_coo_thm}. From these two contour plots, it appears that $f(\lambda, \beta)$ is minimized in Fig.~\ref{fig_heatmap_energy} and maximized in Fig.~\ref{fig_heatmap_ins} for high $\beta$ values, and medium to high $\lambda$ values. Put another way, given $W$ and $\vect z(0)$, the inequalities in \eqref{eq:condition_consensus_VnVp_coo_thm} are more likely to be satisfied with high $\beta$, and medium to high $\lambda$. This suggests that one can reach a consensus configuration equilibrium, from an initially polarized state, if the opinion dynamics strongly impact the coevolutionary process (and have a stronger impact on coordination). 

\begin{example}\label{ex:coo_consensus}
Consider a network of $n=30$ agents partitioned into two groups, $\mathcal{V}_p= \{1,\ldots,15\}$ and $\mathcal{V}_n=\{16,\ldots,30\}$. Given $\vect z(0)$, we set weights uniformly at random and then re-scale them so that conditions in \eqref{eq:condition_consensus_VnVp_coo_thm} are met with $\mathcal{V}_p=\mathcal{V}_{1}^{+}(0)$ and $\mathcal{V}_n = \mathcal{V}_{-1}^{-}(0)$, and weights are row-stochastic.
The network obtained is illustrated in Fig.~\ref{fig:coo_consensus}(a). The parameters are $\lambda=0.8,\beta=0.6$.  As shown in Fig.~\ref{fig:coo_consensus}(b) and ~\ref{fig:coo_consensus}(c), the final actions and opinions converge to a consensus.
\end{example}

\begin{figure}
  \centering
  \includegraphics[width=\linewidth]{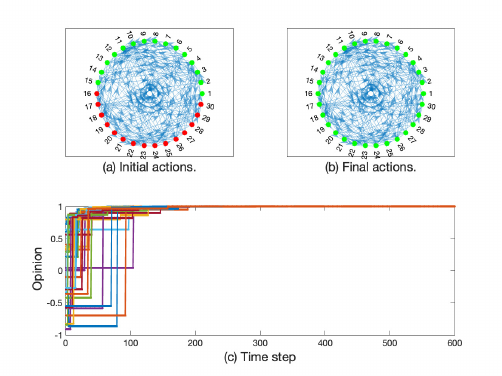}
  \caption{Network and actions at (a) $t=0$ and (b) $t=600$. Nodes are in green if $\sgn(y_i)=x_i=+1$ and in red if $\sgn(y_i)=x_i=-1$. In (c), the temporal evolution of the opinion is shown to converge to consensus.}
  \label{fig:coo_consensus}
\end{figure}
\section{Results for Anti-coordinating Agents}\label{sec: results_anti_coord}


\subsection{Convergence}

We start the analysis of the coevolutionary model with anti-coordinating agents $(\varepsilon_i=-1,\,\forall i\in\mathcal V)$ by proving the following convergence result.

\begin{theorem}\label{thm:anti_convergence_all}
Consider the coevolutionary model in \eqref{update_rule_coo_xy}.
Suppose that Assumptions~\ref{as:uniform} and \ref{as:activation} hold and $\varepsilon=-1$. Then, the state of the system $\vect{z}(t)$ converges to a steady state $\vect{z}^* = (\vect x^*, \vect y^*)$. Specifically, the action vector $\vect{x}(t)$ converges to $\vect{x^*}$ in finite time, while the opinion vector $\vect{y}(t)$ converges to $\vect{y^*}$ asymptotically.
\end{theorem}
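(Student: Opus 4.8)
The plan is to recognize the coevolutionary game with payoff~\eqref{def:payoff_func_coo_anti} as an \emph{exact potential game} and to combine the monotonicity of its potential with the contractive nature of the opinion update. Following~\cite{aghbolagh2023coevolutionary}, but keeping $\varepsilon$ explicit, take as candidate potential $\Phi(\vect z)=\Phi_{\mathrm a}(\vect x)+\Phi_{\mathrm o}(\vect y)+\Phi_{\mathrm c}(\vect x,\vect y)$, where $\Phi_{\mathrm a}(\vect x)=\frac{\varepsilon\lambda(1-\beta)}{8}\sum_{i,j\in\gV}a_{ij}\big[(1-x_i)(1-x_j)+(1+\alpha)(1+x_i)(1+x_j)\big]$, $\Phi_{\mathrm o}(\vect y)=-\frac{(1-\lambda)\beta}{4}\sum_{i,j\in\gV}w_{ij}(y_i-y_j)^2$, and $\Phi_{\mathrm c}(\vect x,\vect y)=-\frac{\lambda\beta}{2}\sum_{i\in\gV}(y_i-x_i)^2$. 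Each of the first two sums is assembled from pairwise terms that are symmetric under the exchange $i\leftrightarrow j$ (this is where Assumption~\ref{as:uniform}, i.e.\ $A=A^{\top}$ and $W=W^{\top}$, is essential), while the third is a sum of terms local to each agent; hence, by the Monderer--Shapley construction~\cite{monderer1996potential}, $\Phi(z_i',\vect{z_{-i}})-\Phi(z_i,\vect{z_{-i}})=f_i(z_i',\vect{z_{-i}})-f_i(z_i,\vect{z_{-i}})$ for every agent $i$ and every unilateral change $z_i\to z_i'$, so $\Phi$ is an exact potential. Since this computation does not use the sign of $\varepsilon$, it extends the potential-game property of~\cite{aghbolagh2023coevolutionary} (proved there for $\varepsilon=+1$) to anti-coordinating agents.

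Next, observe that under the best-response update~\eqref{update_rule_coo_xy} (with the stated tie-breaking) the active agent $i$ moves to some $z_i(t+1)\in\mathcal B_i(f_i(\cdot,\vect{z_{-i}}(t)))$ while all others are frozen, so $\Phi(\vect z(t+1))-\Phi(\vect z(t))=f_i(z_i(t+1),\vect{z_{-i}}(t))-f_i(z_i(t),\vect{z_{-i}}(t))\ge 0$. As $\Phi$ is continuous on the compact set $\{-1,1\}^n\times[-1,1]^n$, it is bounded above; hence $\Phi(\vect z(t))$ is non-decreasing and convergent, and in particular its per-step increments tend to $0$.

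To upgrade this to convergence of $\vect z(t)$, I would first pin down the actions. A short computation (splitting the best-response move into re-optimizing $y_i$ with $x_i$ fixed, then possibly flipping $x_i$ and re-optimizing $y_i$) shows that the potential increment from an activation of $i$ equals $g_i(t)+\frac{\beta}{2}(\tilde y_i(t)-y_i(t))^2$, with $\tilde y_i(t):=(1-\lambda)\sum_j w_{ij}y_j(t)+\lambda x_i(t)$; the curvature $\beta$ comes from $\partial^2 f_i/\partial y_i^2=-\beta$ (using $\sum_j w_{ij}=1$), and $g_i(t)\ge 0$ is the action-switch gain, equal to $0$ if $x_i$ does not flip and to $|\delta_i(\vect z(t))|$ if it does (this identity is essentially the content of Proposition~\ref{prop:dyn}). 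Summing over $t$ and using that $\Phi$ converges yields $\sum_{t}(\tilde y_i(t)-y_i(t))^2<\infty$ for each $i$ and $\sum_{t:\,i\text{ flips at }t}|\delta_i(\vect z(t))|<\infty$. The first bound, together with Assumption~\ref{as:activation} and the boundedness of $(I-(1-\lambda)W)^{-1}$ (which exists because $\rho((1-\lambda)W)\le 1-\lambda<1$), forces $\|\vect y(t)-\hat{\vect y}(\vect x(t))\|\to 0$, where $\hat{\vect y}(\vect x):=(I-(1-\lambda)W)^{-1}\lambda\vect x$ is the unique fixed point of the opinion map for frozen action $\vect x$. Consequently $\delta_i(\vect z(t))-\bar\delta_i(\vect x(t))\to 0$, where $\bar\delta_i(\vect x)$ denotes the discriminant~\eqref{delta_i(z)} evaluated at $\vect y=\hat{\vect y}(\vect x)$, a quantity depending only on $\vect x$; because $A$ and $W$ are symmetric, the resulting action-only best-response dynamics on $\{-1,1\}^n$ coincide with those of a finite potential game, which has the finite-improvement property, so one obtains a finite time $T^{*}$ with $\vect x(t)\equiv\vect x^{*}$ for all $t\ge T^{*}$. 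This is the \emph{main obstacle}: one must make rigorous the passage from ``$\delta_i(\vect z(t))$ asymptotically equal to $\bar\delta_i(\vect x(t))$'' to ``the actions follow the idealized best-response dynamics'', in particular ruling out perpetual flipping of an action whose discriminant oscillates around $0$ with vanishing amplitude (using $\sum_{t:\,i\text{ flips}}|\delta_i(\vect z(t))|<\infty$ and the social-inertia tie-breaking rule, $x_i$ unchanged when $\delta_i=0$) and handling the non-generic action vectors $\vect x$ for which $\bar\delta_i(\vect x)=0$ for some $i$.

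Finally, for $t\ge T^{*}$ the opinion update~\eqref{opinion_update} becomes, for the active agent, a coordinate step of the affine map $\vect y\mapsto(1-\lambda)W\vect y+\lambda\vect x^{*}$; since $\|(1-\lambda)W\|_{\infty}=1-\lambda<1$, this map is a contraction in the $\infty$-norm, so under the persistent activation of Assumption~\ref{as:activation}, classical results on asynchronous iterations of contractions (e.g.~\cite{xiao2006state}) give $\vect y(t)\to\vect y^{*}:=\hat{\vect y}(\vect x^{*})$ asymptotically. Hence $\vect z(t)\to\vect z^{*}=(\vect x^{*},\vect y^{*})$; since $\vect z^{*}$ is a fixed point of~\eqref{update_rule_coo_xy}, we have $z_i^{*}\in\mathcal B_i(f_i(\cdot,\vect{z_{-i}^{*}}))$ for all $i$, so $\vect z^{*}$ is a NE by Lemma~\ref{lemma_Nash_equ}.
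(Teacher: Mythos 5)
Your overall strategy coincides with the paper's: exhibit a potential for the game with payoffs \eqref{def:payoff_func_coo_anti} under Assumption~\ref{as:uniform} with $\varepsilon=-1$, use monotonicity of the potential along the asynchronous best responses \eqref{update_rule_coo_xy}, and conclude convergence. Your exact-potential computation is correct --- the pairwise symmetrization needs only $A=A^\top$, $W=W^\top$ and homogeneous parameters, and never uses the sign of $\varepsilon$ --- whereas the paper works with the rescaled function \eqref{mixed_Phi(z)} and claims only a generalized ordinal potential; up to that point your version is, if anything, cleaner. Your decomposition of the per-step increment into the action-switch gain $|\delta_i(\vect z(t))|$ plus $\tfrac{\beta}{2}(\tilde y_i(t)-y_i(t))^2$ is also exactly right (the curvature $-\beta$ and the identity between the switch gain and \eqref{delta_i(z)} both check out), as is the final $\infty$-norm contraction argument for the opinions once the actions are frozen.

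The genuine gap is where you flag it, but it is more than an unfinished technicality: the route you propose is circular. You assert that $\sum_t(\tilde y_i(t)-y_i(t))^2<\infty$ together with Assumption~\ref{as:activation} forces $\|\vect y(t)-\hat{\vect y}(\vect x(t))\|\to 0$. That conclusion presupposes that $\vect x(t)$ is eventually (or asymptotically) constant, which is precisely what you are trying to prove. Each flip of $x_i$ displaces both the realized opinion ($y_i(t+1)=\tilde y_i(t)-2\lambda x_i(t)$, cf.\ \eqref{opinion_update}) and the target $\hat{\vect y}(\vect x(t))$ by an amount of order $\lambda$, bounded away from zero; so while flips persist the tracking error need not vanish, the reduction to the ``idealized'' discriminant $\bar\delta_i(\vect x)$ is unavailable, and the finite-improvement property of the auxiliary action-only game cannot be invoked. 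What the potential gives you directly is only $\sum_{t:\,i\text{ flips at }t}|\delta_i(\vect z(t))|<\infty$, and the missing step is to exclude infinitely many flips with $|\delta_i(\vect z(t))|\to 0$; this must be argued with the original potential and the compactness of the state space, as in \cite[Theorem~1]{aghbolagh2023coevolutionary}. In fairness, the paper's own proof of Theorem~\ref{thm:anti_convergence_all} does not supply this step either --- it defers both the potential-game verification and the convergence argument wholesale to \cite{aghbolagh2023coevolutionary} --- but your proposal replaces that reference with a sketch that, as written, cannot be completed.
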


\begin{proof}
The proof is reported in Appendix~\ref{app:thm:anti_convergence_all}.  
\end{proof}


After having established convergence in the presence of anti-coordinating agents, we start analyzing some equilibria of interest, namely, consensus and polarized configurations.

\subsection{Consensus Equilibria}
We start by analyzing whether a consensus equilibrium exists,  establishing the following result.
\begin{theorem}\label{thm:consensus_ equilibrium_anti} Consider the coevolutionary model in \eqref{update_rule_coo_xy}.
Suppose that Assumptions~\ref{as:uniform}--\ref{as:activation} hold and $\varepsilon=-1$. Then, $\vect{z}^*=[\vect{1}_n,\vect{1}_n]$  
is an equilibrium of the coevolutionary model on the two-layer network $\gG$ iff, for any $i\in\mathcal{V}$, it holds  
\begin{equation}\label{eq:condition_A}\sum\nolimits_{j\in\mathcal{V}}a_{ij}<\frac{2(1-\lambda)\beta}{(1-\beta)(1+\alpha)}.
\end{equation}
\end{theorem}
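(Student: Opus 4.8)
The plan is to check the equilibrium condition directly using the closed-form update rule in \eqref{update_rule_coo_xy}, specialized to $\varepsilon=-1$ and evaluated at the candidate state $\vect z^*=[\vect 1_n,\vect 1_n]$. Since $\vect z^*$ is an equilibrium iff no activated agent changes their state, and by the update rule \eqref{opinion_update} the opinion $y_i(t+1)$ is a convex combination of $\sum_j w_{ij}y_j(t)$ and $\mathcal S(\delta_i)$, when $\vect x=\vect y=\vect 1_n$ one has $\sum_j w_{ij}y_j=1$ (using that $W$ is row-stochastic), so the opinion stays at $1$ precisely when $\mathcal S(\delta_i(\vect z^*))=+1$, i.e.\ when $\delta_i(\vect z^*)\geq 0$ (recall the tie-breaking convention keeps $x_i$ unchanged, and $x_i=+1$ here). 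Likewise the action \eqref{action_update} stays at $+1$ iff $\delta_i(\vect z^*)\geq 0$. Hence $\vect z^*$ is an equilibrium iff $\delta_i(\vect z^*)\geq 0$ for all $i\in\mathcal V$. Wait — I should double check the direction: the theorem claims a \emph{strict} inequality in \eqref{eq:condition_A}, so I expect the condition to come out as $\delta_i(\vect z^*)\ge 0$ being equivalent to the stated strict bound, which forces me to track the sign carefully.

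The key computation is then to evaluate $\delta_i(\vect z^*)$ from \eqref{delta_i(z)}. Substituting $x_j=+1$ and $y_j=+1$ for all $j$, and $\varepsilon=-1$:
\begin{equation*}
\delta_i(\vect z^*)=-\frac{\lambda(1-\beta)}{2}\sum\nolimits_{j\in\mathcal V}a_{ij}\,[\,2+2\alpha\,]+2(1-\lambda)\lambda\beta\sum\nolimits_{j\in\mathcal V}w_{ij}.
\end{equation*}
Using that both $A$ and $W$ are row-stochastic, $\sum_j a_{ij}=\sum_j w_{ij}=1$, wait — actually the theorem keeps $\sum_j a_{ij}$ as a free quantity, so I must \emph{not} assume $A$ is row-stochastic here; only $W$ is, giving $\sum_j w_{ij}=1$. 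So $\delta_i(\vect z^*)=-\lambda(1-\beta)(1+\alpha)\sum_j a_{ij}+2(1-\lambda)\lambda\beta$. Dividing through by $\lambda>0$ (valid since $\lambda\in(0,1)$ under Assumption~\ref{as:uniform}), the condition $\delta_i(\vect z^*)\ge 0$ becomes $2(1-\lambda)\beta\ge(1-\beta)(1+\alpha)\sum_j a_{ij}$, i.e.\ $\sum_j a_{ij}\le \frac{2(1-\lambda)\beta}{(1-\beta)(1+\alpha)}$.

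The remaining subtlety — and the place I expect the main obstacle — is reconciling the non-strict inequality $\delta_i(\vect z^*)\ge 0$ I derived with the strict inequality in the theorem statement. The resolution must lie in the tie-breaking rule: when $\delta_i(\vect z^*)=0$, the update \eqref{action_update} keeps $x_i(t+1)=x_i(t)=+1$, so the \emph{action} is unchanged, but I need to recheck whether the \emph{opinion} is genuinely fixed. With $\delta_i=0$ we get $\mathcal S(\delta_i)=x_i=+1$, so $y_i(t+1)=(1-\lambda)\cdot 1+\lambda\cdot 1=1$, which is fixed. So equality does seem admissible, and I would need to look for the source of strictness elsewhere — perhaps the intended reading is that \eqref{eq:condition_A} should be non-strict, or the paper's convention on $\beta\in(0,1)$, $\lambda\in(0,1)$ rules out the boundary case generically. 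I would present the derivation cleanly as: $\vect z^*$ equilibrium $\iff \delta_i(\vect z^*)\ge 0\ \forall i \iff$ \eqref{eq:condition_A} (up to the boundary), noting that under the open-interval parameter assumptions the boundary is non-generic, and flag the strict-vs-non-strict point explicitly. The converse direction (if \eqref{eq:condition_A} fails for some $i$, then an activated $i$ switches to $-1$, so $\vect z^*$ is not an equilibrium) follows from the same computation run in reverse, invoking Assumption~\ref{as:activation} only to guarantee that such an $i$ is eventually activated.
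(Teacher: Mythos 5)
Your proposal is correct and follows essentially the same route as the paper's proof: evaluate the update rule at $[\vect 1_n,\vect 1_n]$, note that the opinion equation reduces to $1=(1-\lambda)+\lambda$ by row-stochasticity of $W$, compute $\delta_i(\vect z^*)=-\lambda(1-\beta)(1+\alpha)\sum_j a_{ij}+2\lambda\beta(1-\lambda)$, and read off the condition on $\sum_j a_{ij}$. The strict-versus-non-strict subtlety you flag is genuine: since the tie-breaking convention gives $\mathcal S(0)=x_i=+1$, the boundary case $\delta_i(\vect z^*)=0$ also leaves the consensus fixed, so the exact ``iff'' condition is the non-strict inequality $\sum_j a_{ij}\le \frac{2(1-\lambda)\beta}{(1-\beta)(1+\alpha)}$; the paper's own proof does not address this, as it only argues that the strict inequality implies $\delta_i(\vect z^*)>0$ and hence the equilibrium property, i.e., it really establishes sufficiency rather than the full equivalence.
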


\begin{proof}The proof is reported in Appendix~\ref{app:consensus_ equilibrium_anti}.
\end{proof}

\begin{figure}
  \centering
  \includegraphics[width=\linewidth]{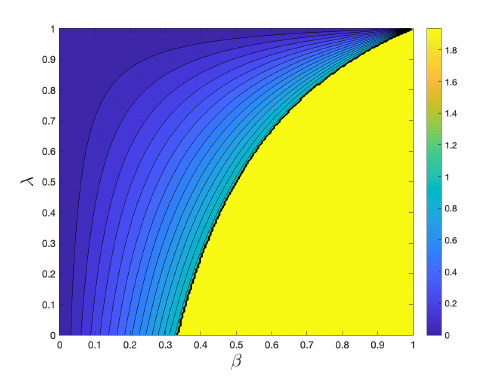}
\caption{Contour plot of the RHS in \eqref{eq:condition_A} in Theorem~\ref{thm:consensus_ equilibrium_anti} when $\alpha=0$. The black regions in the plot represent areas where the contour lines are denser, indicating faster changes in values.}
  \label{fig: 2D_anti_coo_contour}
\end{figure}

We give a 2D contour plot of the RHS of the inequality in Theorem~\ref{thm:consensus_ equilibrium_anti} to observe whether a given influence network $\mathcal{G}[A]$ easily satisfies the condition when $\lambda$ and $\beta$ take different values with $\alpha=0$. 
 Fig.~\ref{fig: 2D_anti_coo_contour} depicts the condition in \eqref{eq:condition_A} with $\alpha = 0$. From the contour plot, it appears that $\frac{2(1-\lambda)\beta}{(1-\beta)}$ is maximized for high $\beta$ values, and low to medium $\lambda$ values. In other words, the condition in \eqref{eq:condition_A} is more likely to be satisfied when $\beta$ is large and $\lambda$ is small. Intuitively, this suggests that it is easier to reach a consensus equilibrium when the decision-making process is more affected by the opinion exchange. We also see a clear distinction between  coordinating agents and anti-coordinating agents. For the former, there is always a consensus equilibrium~\cite[Proposition~2]{raineri2025}. For the latter, the existence of such equilibrium depends on the parameters. 
In what follows, we present an example that satisfies the conditions of Theorem~\ref{thm:anti_convergence_all}.

\begin{example}\label{ex:anti_consensus}
Consider a network of $n=30$ agents with weights sampled uniformly at random and then re-scale so that the matrix is row-stochastic and so that conditions in \eqref{eq:condition_A} are met with $\alpha=0$. The network obtained is illustrated in Fig.~\ref{fig:anti_coo_ER}(a). The parameters are $\lambda=0.5,\beta=0.8$. As shown in Fig.~\ref{fig:anti_coo_ER}(b) and ~\ref{fig:anti_coo_ER}(c), the final actions and opinions converge to a consensus.
\end{example}
 
\begin{figure}
  \centering
  \includegraphics[width=\linewidth]{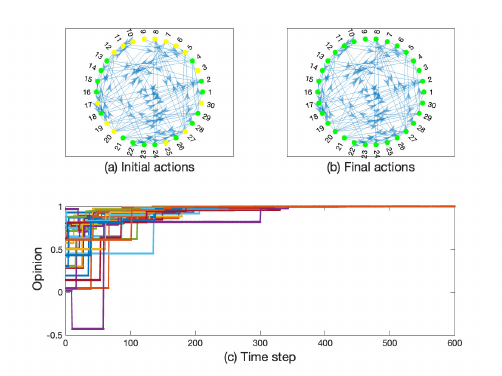}
\caption{Network and actions at (a) $t=0$ and (b) $t=600$. Nodes are in green if $\sgn(y_i)=x_i=+1$ and in yellow if $\sgn(y_i)\neq x_i$. In (c), the temporal evolution of the opinion shows convergence to consensus.}
  \label{fig:anti_coo_ER}
\end{figure}

\subsection{Polarized Equilibria}

Here, we discuss how our model may lead to the emergence of stable polarized configurations (see Definition~\ref{def_polar}). 
In particular, the following result provides a sufficient condition for the existence of a polarized equilibrium in the presence of anti-coordinating agents, and complements a similar result for coordinating agents in \cite[Theorem~2]{aghbolagh2023coevolutionary}.
\begin{theorem}\label{thm:polarized_anti}
Consider the coevolutionary model in \eqref{update_rule_coo_xy}.
Suppose that Assumptions~\ref{as:uniform}--\ref{as:polarization} hold and $\varepsilon=-1$. Suppose there exists a partitioning of the agents $(\mathcal{V}_p,\mathcal{V}_n)$ with 
\begin{equation}\label{eq:dp_dn_anti}
 \begin{aligned}
   \underline{d}_p:=\min_{i\in\mathcal{V}_p} \sum\nolimits_{j\in\mathcal{V}_p} w_{ij},  
   \bar{d}_p:=\max_{i\in\mathcal{V}_p} \sum\nolimits_{j\in\mathcal{V}_p} w_{ij},\\
   \underline{d}_n:=\min_{i\in\mathcal{V}_n } \sum\nolimits_{j\in\mathcal{V}_n }w_{ij}, 
  \bar{d}_n:=\max_{i\in\mathcal{V}_n } \sum\nolimits_{j\in\mathcal{V}_n }w_{ij}, 
\end{aligned}
\end{equation}
such that the following three conditions are satisfied:
\begin{subequations}
\begin{equation}\label{eq:pol_y_cond_anti}
   \lambda>\max \Big(\frac{\underline{d}_p-\bar{d}_n }{1+\underline{d}_p-\underline{d}_n}, \frac{\bar d_n-\underline d_p}{1+\bar d_n-\underline d_p}\Big),
\end{equation}
\begin{equation}\label{eq:pol_xp_cond_anti}
    \frac{\lambda (\underline{d}_p +\bar{d}_n - 1) + \underline{d}_p -\bar{d}_n}{1-(1-\lambda)(\bar{d}_p +\bar{d}_n-1)}+\frac{(1-\beta)(1-2\bar{d}_p)}{2 \beta  (1-\lambda )}>0,
\end{equation}
\begin{equation}\label{eq:pol_xn_cond_anti}
 \frac{\lambda (\underline{d}_n + \bar{d}_p - 1) + \underline{d}_n - \bar{d}_p}{1-(1-\lambda)(\bar{d}_n+\bar{d}_p -1)}-\frac{(1-\beta)(1-2\bar{d}_n)}{2 \beta  (1-\lambda )}>0.
\end{equation}\end{subequations}
Then, the coevolutionary model in \eqref{update_rule_coo_xy} has a polarized equilibrium with respect to the partition $(\mathcal{V}_p,\mathcal{V}_n)$.
\end{theorem}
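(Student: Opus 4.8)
\emph{Proof plan.} The plan is to construct an explicit candidate state with the prescribed polarized structure and to check directly that it is a fixed point of the best-response dynamics \eqref{update_rule_coo_xy}. \textbf{Step 1 (equilibrium characterization).} Under Assumption~\ref{as:activation} every agent is activated infinitely often, so a state $\vect z^* = (\vect x^*, \vect y^*)$ is an equilibrium if and only if it is simultaneously fixed under every agent's update. By \eqref{opinion_update} this forces $\vect y^* = (1-\lambda) W \vect y^* + \lambda \vect x^*$, i.e.\ $\vect y^* = M \vect x^*$ with $M := \lambda\big(I - (1-\lambda) W\big)^{-1} = \lambda \sum_{k \ge 0} (1-\lambda)^k W^k$, which is well defined (since $0 < 1-\lambda < 1$ and $W$ is row-stochastic), nonnegative and row-stochastic; in particular $\vect y^* \in [-1,1]^n$, so $\vect z^*\in\gA$. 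By \eqref{action_update} and the tie-breaking convention it also forces the action-consistency conditions $\delta_i(\vect z^*) \ge 0$ when $x_i^* = +1$ and $\delta_i(\vect z^*) \le 0$ when $x_i^* = -1$, where, under Assumption~\ref{as:polarization} with $\varepsilon = -1$, the discriminant reads $\delta_i(\vect z^*) = 2\lambda\beta(1-\lambda) \sum_{j} w_{ij} y_j^* - (1-\beta)\lambda \sum_{j} w_{ij} x_j^*$.

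\textbf{Step 2 (candidate and polarization).} Take $x_i^* = +1$ for $i \in \mathcal V_p$ and $x_i^* = -1$ for $i \in \mathcal V_n$, and set $\vect y^* := M \vect x^*$. Writing $\sigma_i := \sum_{j \in \mathcal V_p} m_{ij} \in [0,1]$, one has $y_i^* = 2\sigma_i - 1$, so $\vect z^*$ is polarized with respect to $(\mathcal V_p,\mathcal V_n)$ (Definition~\ref{def_polar}) precisely when $\sigma_i \ge 1/2$ for $i \in \mathcal V_p$ and $\sigma_i \le 1/2$ for $i \in \mathcal V_n$. To bound $\sigma_i$, I would use the identity $M = \lambda I + (1-\lambda) M W$, which gives $\sigma_i = \lambda + (1-\lambda) \sum_k m_{ik} d_k$ for $i\in\mathcal V_p$ and $\sigma_i = (1-\lambda) \sum_k m_{ik} d_k$ for $i\in\mathcal V_n$, with $d_k := \sum_{j \in \mathcal V_p} w_{kj}$; splitting the sum according to $k \in \mathcal V_p$ (where $d_k \in [\underline d_p, \bar d_p]$ and $\sum_{k \in \mathcal V_p} m_{ik} = \sigma_i$) and $k \in \mathcal V_n$ (where $d_k = 1 - \sum_{j \in \mathcal V_n} w_{kj} \in [1-\bar d_n,\, 1-\underline d_n]$ and $\sum_{k \in \mathcal V_n} m_{ik} = 1-\sigma_i$) yields self-referential linear inequalities in the single scalar $\sigma_i$. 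Their solution (the coefficients, of the form $1 - (1-\lambda)(\,\cdot\,)$, are bounded below by $\lambda > 0$, hence positive) gives explicit lower and upper bounds on $\sigma_i$ over $\mathcal V_p$ and $\mathcal V_n$ respectively, and imposing that the lower bound be $\ge 1/2$ and the upper bound be $\le 1/2$ reduces, after elementary algebra, to condition \eqref{eq:pol_y_cond_anti}.

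\textbf{Step 3 (action consistency).} It remains to verify $\delta_i(\vect z^*) \ge 0$ on $\mathcal V_p$ and $\le 0$ on $\mathcal V_n$. On $\mathcal V_p$ the opinion relation gives $\sum_j w_{ij} y_j^* = (y_i^* - \lambda)/(1-\lambda)$, which is bounded below via the lower bound on $\sigma_i$ from Step~2, while $\sum_j w_{ij} x_j^* = 2 d_i - 1 \le 2\bar d_p - 1$; substituting into $\delta_i(\vect z^*)\ge 0$ and clearing the (positive) denominators turns the requirement into \eqref{eq:pol_xp_cond_anti}. The symmetric computation on $\mathcal V_n$, now using the upper bound on $\sigma_i$ there and $\sum_j w_{ij} x_j^* \ge 2\underline d_n - 1$, yields \eqref{eq:pol_xn_cond_anti}. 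Combined with Step~2, this shows the candidate $\vect z^*$ is an equilibrium polarized with respect to $(\mathcal V_p,\mathcal V_n)$; since both $\mathcal V_p$ and $\mathcal V_n$ are nonempty, the claim follows.

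I expect the main obstacle to be the bookkeeping in Steps~2 and~3: for each block-row sum one must pick the worst-case orientation while simultaneously controlling the sign of the factor it multiplies (e.g.\ $\sigma_i \ge 0$, or the sign of $y_i^*-\lambda$), and then push the resulting several-parameter algebra through so that it collapses exactly onto the three displayed inequalities, which also forces one to check that all denominators that appear are positive. Working with the single recursion for $\sigma_i$, rather than a coupled system in the four block extrema $\min_{\mathcal V_p} y^*$, $\max_{\mathcal V_p} y^*$, $\min_{\mathcal V_n} y^*$, $\max_{\mathcal V_n} y^*$, is what keeps this tractable and, in particular, avoids having to establish the sign ordering $\min_{\mathcal V_n} y^* \le 0 \le \min_{\mathcal V_p} y^*$ in advance.
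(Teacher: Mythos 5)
Your overall strategy is the same as the paper's: fix the polarized action vector $\vect x^*$, solve the opinion fixed-point equation $\vect y^* = (1-\lambda)W\vect y^* + \lambda\vect x^*$, show that \eqref{eq:pol_y_cond_anti} forces $\vect y^*$ to be sign-aligned with $\vect x^*$, and then verify the action-consistency conditions $\delta_i(\vect z^*)\ge 0$ on $\mathcal V_p$ and $\delta_i(\vect z^*)\le 0$ on $\mathcal V_n$ under \eqref{eq:pol_xp_cond_anti}--\eqref{eq:pol_xn_cond_anti}. Your Step 2 is a genuinely nicer route to the middle part: the paper (following its reference) works with a coupled system in the four block extrema of $\vect y^*$, whereas your single scalar recursion $\sigma_i = \lambda[i\in\mathcal V_p] + (1-\lambda)\sum_k m_{ik} d_k$ obtained from $M=\lambda I+(1-\lambda)MW$ is self-contained and, reassuringly, reproduces exactly the paper's key intermediate bound $y_m^+\ge \big(\lambda+(1-\lambda)(\underline d_p-\bar d_n)\big)/\big(1-(1-\lambda)(\underline d_p+\bar d_n-1)\big)$.

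The gap is in your claim that the algebra ``collapses exactly onto'' the three displayed inequalities --- it does not, and you never carry the computation far enough to notice. Working it through: your $\mathcal V_p$-side requirement $\sigma_i\ge 1/2$ does reduce to $\lambda\ge\frac{\bar d_n-\underline d_p}{1+\bar d_n-\underline d_p}$, the second term of \eqref{eq:pol_y_cond_anti}; but your $\mathcal V_n$-side requirement $\sigma_i\le 1/2$ reduces to $\lambda\ge\frac{\bar d_p-\underline d_n}{1+\bar d_p-\underline d_n}$, which is \emph{not} the first term $\frac{\underline d_p-\bar d_n}{1+\underline d_p-\underline d_n}$ of \eqref{eq:pol_y_cond_anti} and is not implied by it (e.g.\ when $\underline d_p=\bar d_n$ the paper's first term is $0$ while $\bar d_p-\underline d_n$ can be positive). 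Likewise, your Step 3 on $\mathcal V_p$ produces the left fraction of \eqref{eq:pol_xp_cond_anti} with denominator $1-(1-\lambda)(\underline d_p+\bar d_n-1)$, whereas the theorem states $1-(1-\lambda)(\bar d_p+\bar d_n-1)$; since the numerator can take either sign, neither version implies the other. So as a proof of the theorem \emph{as stated}, your argument is incomplete: you would have to either show that \eqref{eq:pol_y_cond_anti}--\eqref{eq:pol_xn_cond_anti} imply the conditions your construction actually needs, or accept that you have proved a variant with slightly different hypotheses. (In fairness, the paper's own intermediate bound is consistent with your versions, and the mismatches look like they may be typos in the theorem statement; but a blind proof cannot simply assert the match.) A smaller point: you should also note that the positivity of $1-(1-\lambda)(\cdot)$ denominators follows from $|\cdot|\le 1$, which you gesture at but do not state for the Step 3 denominators.
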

\begin{proof}
The proof is reported in Appendix~\ref{app:thm:polarized_anti}.    
\end{proof}

Theorem~\ref{thm:polarized_anti} establishes three conditions for the existence of a polarized equilibrium. Although the last two conditions may not be intuitive, all three are easy to check. Below, we study the stability of such equilibrium, establishing a more conservative sufficient condition.

\begin{theorem}\label{thm:polarized_convergence_anti}
Consider the coevolutionary model in \eqref{update_rule_coo_xy}. Suppose that Assumptions~\ref{as:uniform}--\ref{as:polarization} hold, $\beta<1/(2-\lambda)$, and $\varepsilon=-1$. 
Suppose $\vect{z}(0)$ is a polarized configuration with respect to a partition $(\mathcal{V}_{1}^{+}(0),\mathcal{V}_{-1}^{-}(0))=:(\mathcal{V}_p,\mathcal{V}_n)$.  
If
\begin{equation}\label{eq:condition_polarized_anti}
    \frac{1-2\lambda}{1-\lambda}<\sum\nolimits_{j\in  \mathcal S}w_{ij}<\frac12\Big(1+\frac{\beta(1-\lambda)}{2\beta-\beta\lambda-1}\Big), \forall\, i\in \mathcal S,
\end{equation}
holds separately for $\mathcal S=\mathcal V_p$ and $\mathcal S=\mathcal V_n$, then $\vect{z}(t)$ is polarized for any $t$ with respect to the same partition $(\mathcal{V}_p,\mathcal{V}_n)$. Moreover, $\lim_{t\to\infty} \vect{z}(t) = \vect{z}^*$, where $\vect z^*$ is a polarized equilibrium with respect to the same partition $(\mathcal{V}_p,\mathcal{V}_n)$.
\end{theorem}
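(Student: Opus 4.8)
The plan is to show that the polarized structure is \emph{invariant} under the dynamics, and then to leverage the convergence result of Theorem~\ref{thm:anti_convergence_all} to conclude that the limit point is a polarized equilibrium. Throughout, I would work with the simplified discriminant \eqref{eq:delta_ass1_4_coo} (valid under Assumption~\ref{as:polarization}) and with $\varepsilon=-1$. First I would set up the induction hypothesis: assume that at time $t$, $\vect z(t)$ is polarized with respect to $(\mathcal V_p,\mathcal V_n)$, i.e., $x_i(t)=+1$ and $y_i(t)>0$ for $i\in\mathcal V_p$, and $x_i(t)=-1$, $y_i(t)<0$ for $i\in\mathcal V_n$ (strict signs on the opinions, which is the content of $\mathcal V_1^+$ and $\mathcal V_{-1}^-$ in Definition~\ref{def_polar}). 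Since $\vect y(0)$ lies in a compact set and the opinion update \eqref{opinion_update} is a convex combination of neighbors' opinions and $\pm1$, a first easy sub-step is to get uniform bounds $y_i(t)\in(0,1]$ for $i\in\mathcal V_p$ and $y_i(t)\in[-1,0)$ for $i\in\mathcal V_n$ that persist; in fact one should track a sharper lower bound on $|y_i(t)|$ away from $0$, since this is what is needed to keep the action fixed.

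The core computation is the sign of $\delta_i(\vect z(t))$ for an agent $i$ that activates at time $t$. For $i\in\mathcal V_p$, using \eqref{eq:delta_ass1_4_coo} with $\varepsilon=-1$, split each of the two sums $\sum_j w_{ij}y_j(t)$ and $\sum_j w_{ij}x_j(t)$ into the contribution from $\mathcal V_p$ (positive, of total $w$-weight $\sum_{j\in\mathcal V_p}w_{ij}$) and from $\mathcal V_n$ (negative, of total weight $1-\sum_{j\in\mathcal V_p}w_{ij}$ by row-stochasticity). Because $\varepsilon=-1$, the anti-coordination term wants $i\in\mathcal V_p$ to switch toward $-1$, so I need the lower bound in \eqref{eq:condition_polarized_anti}, namely $\sum_{j\in\mathcal V_p}w_{ij}>\frac{1-2\lambda}{1-\lambda}$, to guarantee that the (self-consistency-weighted) opinion pull, together with the favorable part of the action term, keeps $\delta_i>0$ so that $x_i(t+1)=+1$ — and then \eqref{opinion_update} keeps $y_i(t+1)>0$. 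Simultaneously, I need the \emph{upper} bound $\sum_{j\in\mathcal V_p}w_{ij}<\frac12(1+\frac{\beta(1-\lambda)}{2\beta-\beta\lambda-1})$ — here the hypothesis $\beta<1/(2-\lambda)$ makes $2\beta-\beta\lambda-1<0$, so the bound is a genuine constraint — which controls the size of the within-group opinion coupling and prevents a runaway that would eventually let an agent in $\mathcal V_n$ flip; symmetrically for $i\in\mathcal V_n$. Carrying both sign analyses through (they are mirror images under $\vect z\mapsto-\vect z$, $\mathcal V_p\leftrightarrow\mathcal V_n$) closes the induction and establishes that $\vect z(t)$ stays polarized with respect to $(\mathcal V_p,\mathcal V_n)$ for all $t$.

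Once invariance is in hand, convergence is almost immediate: Theorem~\ref{thm:anti_convergence_all} gives $\vect z(t)\to\vect z^*=(\vect x^*,\vect y^*)$ with $\vect x(t)$ reaching $\vect x^*$ in finite time. Since $x_i(t)$ is pinned to $+1$ on $\mathcal V_p$ and to $-1$ on $\mathcal V_n$ for all $t$, we get $\vect x^*$ of exactly this form, hence $\vect x^*$ is nonconstant, i.e., both $\mathcal V_1^+(\vect z^*)$ and $\mathcal V_{-1}^-(\vect z^*)$ are nonempty; and $y_i^*=\lim y_i(t)$ has $y_i^*\ge0$ on $\mathcal V_p$, $y_i^*\le0$ on $\mathcal V_n$, so actions and opinions are aligned at $\vect z^*$. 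The only gap to fill is that the weak inequalities $y_i^*\ge0$ should be strict, which is why I would carry a \emph{uniform} positive lower bound on $|y_i(t)|$ through the induction rather than just a sign: since the opinion satisfies $y_i(t+1)=(1-\lambda)\sum_j w_{ij}y_j(t)+\lambda x_i(t)$ whenever $i$ is active, and $x_i(t)=+1$ on $\mathcal V_p$, standard arguments (using Assumption~\ref{as:activation} so every agent updates within every window of length $T$, plus the fact that the opinions on $\mathcal V_p$ never go nonpositive) force $\liminf_t \min_{i\in\mathcal V_p} y_i(t)>0$. Then $\vect z^*$ satisfies Definition~\ref{def_polar}, i.e., it is a polarized equilibrium with respect to $(\mathcal V_p,\mathcal V_n)$, which is the claim.

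The main obstacle I anticipate is not the convergence half but the \emph{simultaneous} verification of the two-sided bound \eqref{eq:condition_polarized_anti}: the lower bound protects the agents in $\mathcal S$ from flipping, but one must also check that the opinions, evolving by a within-block-plus-boundary average, cannot drift over time to a configuration where \emph{some other} agent's discriminant changes sign — i.e., the bookkeeping that the invariant "box" for $(\vect x(t),\vect y(t))$ is mapped into itself by one asynchronous step requires the upper bound and the condition $\beta<1/(2-\lambda)$ to interact correctly. Getting the worst-case estimates tight enough that both inequalities in \eqref{eq:condition_polarized_anti} suffice (rather than something stronger) is where the real work lies; the rest is routine given Proposition~\ref{prop:dyn} and Theorem~\ref{thm:anti_convergence_all}.
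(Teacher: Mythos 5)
Your overall architecture --- induction to show the polarized structure is invariant, then Theorem~\ref{thm:anti_convergence_all} for convergence, then an argument that the limit is polarized --- is exactly the paper's. But your account of \emph{which inequality in \eqref{eq:condition_polarized_anti} does what} is backwards, and this is not cosmetic: it is the crux of the estimate. Write $p=2\lambda\beta(1-\lambda)$, $q=2\lambda(1-\beta)$ and $s=\sum_{j\in\mathcal V_p}w_{ij}$ for $i\in\mathcal V_p$. Splitting \eqref{eq:delta_ass1_4_coo} into in-group and out-group sums and taking the worst case over opinions ($y_j\to 0^+$ in-group, $y_k\to -1$ out-group) gives $\delta_i\geq (p-q)s-p+\tfrac{q}{2}$, and under $\beta<1/(2-\lambda)$ the coefficient $p-q=2\lambda(2\beta-\beta\lambda-1)$ is \emph{negative}: for an anti-coordinating agent, a larger in-group weight makes $\delta_i$ \emph{smaller}. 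Hence it is the \emph{upper} bound $s<\tfrac12\bigl(1+\tfrac{p}{p-q}\bigr)$ that delivers $\delta_i>0$ on $\mathcal V_p$ (and, symmetrically, $\delta_i<0$ on $\mathcal V_n$) at every step --- not the lower bound, as you claim. Indeed, for $\lambda\geq 1/2$ the lower bound is vacuous, yet an agent with $s=1$ and near-zero in-group opinions has $\delta_i\approx -q/2<0$ and flips, so no argument based on the lower bound alone can keep the actions fixed. The lower bound's actual job is the one you leave unassigned: it guarantees that an active $i\in\mathcal V_p$ keeps $y_i(t+1)=(1-\lambda)\sum_j w_{ij}y_j(t)+\lambda>0$, since the out-group contribution is at worst $-(1-\lambda)(1-s)$ and $\lambda-(1-\lambda)(1-s)>0$ iff $s>\tfrac{1-2\lambda}{1-\lambda}$. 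Your ``runaway drift over time'' worry is also a red herring: both bounds enter as one-step worst-case estimates that use only the signs of the current opinions, so no bookkeeping of accumulated drift is needed.

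On the final step, your idea of carrying a uniform positive lower bound on $|y_i(t)|$ to force $y_i^*\neq 0$ is a legitimate (and arguably more self-contained) alternative to what the paper does: the paper instead checks that \eqref{eq:condition_polarized_anti} implies the hypotheses of Theorem~\ref{thm:polarized_anti} and invokes that existence result to conclude the limiting equilibrium is polarized with respect to $(\mathcal V_p,\mathcal V_n)$. But since you have not carried out the central sign computation and the mechanism you state for it would fail, the proposal as written has a genuine gap.
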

\begin{proof}The proof is reported in Appendix~\ref{app:polarized_convergence_anti}. 
\end{proof}

\begin{remark}
In~\cite[Theorem 3]{aghbolagh2023coevolutionary} a sufficient condition for existence and convergence to a polarized equilibrium for networks of coordinating agents is reported, paralleling Theorem~\ref{thm:polarized_convergence_anti}. The sufficient condition reported in \cite[Theorem 3]{aghbolagh2023coevolutionary} can be expressed in the notation of this paper as 
\begin{equation}\label{eq:condition_polarized_coo}
    \sum\nolimits_{j\in  \mathcal S}w_{ij}> \max\Big\{\frac{1-2\lambda}{1-\lambda},\frac12\Big(1+\frac{\beta(1-\lambda)}{1-\beta\lambda}\Big)\Big\}, \forall\,i\in \mathcal  S,
\end{equation}
for, separately, $ \mathcal S = \mathcal{V}_p$ and $\mathcal S = \mathcal{V}_n$. The set $S$ in \eqref{eq:condition_polarized_coo} are all $\frac{1}{2}$-cohesive sets since $\max\{\frac{1-2\lambda}{1-\lambda},\frac12(1+\frac{\beta(1-\lambda)}{1-\beta\lambda})\}\geq \frac{1}{2}$.
\end{remark}

\begin{figure}
  \centering
  \includegraphics[width=\linewidth]{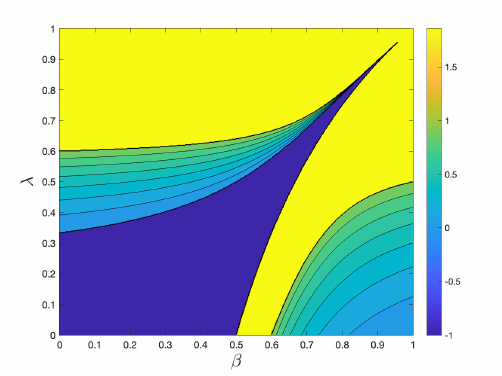}
\caption{Contour plot of the difference between the RHS and left-hand-side(LHS) in the condition of \eqref{eq:condition_polarized_anti} from Theorem~\ref{thm:polarized_convergence_anti} for agents in $\mathcal{V}_p$ or $\mathcal{V}_n$.}
  \label{fig:2D_anti_polarize_contour_diff}
\end{figure}

\begin{figure}
 \centering
 \includegraphics[width=\linewidth]{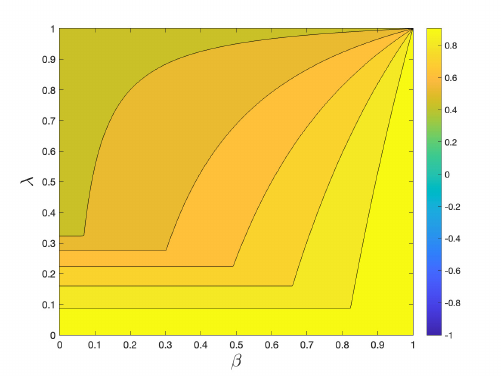}
\caption{Contour plot of the RHS in \eqref{eq:condition_polarized_coo} for agents in $\mathcal{V}_p$ or $\mathcal{V}_n$.}
  \label{fig:2D_coo_polarize_contour}
\end{figure}

We now present 2D contour plots to explain the inequalities in \eqref{eq:condition_polarized_anti} and \eqref{eq:condition_polarized_coo} more intuitively.
For the anti-coordination game, Fig.~\ref{fig:2D_anti_polarize_contour_diff} presents the two inequalities in \eqref{eq:condition_polarized_anti}. From the contour plot, it appears that $f(\lambda,\beta)= \frac12(1+\frac{\beta(1-\lambda)}{2\beta-\beta\lambda-1})-\frac{1-2\lambda}{1-\lambda}
$ is maximized in Fig.~\ref{fig:2D_anti_polarize_contour_diff} for low to medium $\beta$ values, and high $\lambda$ values. Specifically when $\lambda$ is less than about $0.3$ and $\beta$ is less than $0.5$, no network can be found to satisfy the conditions~\eqref{eq:condition_polarized_anti}. Put another way, given $W$ and $\vect z(0)$, the inequalities in \eqref{eq:condition_polarized_anti} are more likely to be satisfied with low to medium $\beta$, and high $\lambda$. This suggests that an initially polarized state could still converge to a polarized state if the anti-coordinate action dynamics strongly impact the coevolutionary process (and have a stronger impact in the anti-coordination games component). For the coordination game, Fig.~\ref{fig:2D_coo_polarize_contour} presents the two inequalities in \eqref{eq:condition_polarized_coo}. From the contour plot, it appears that $f(\lambda,\beta)= \max\{\frac{1-2\lambda}{1-\lambda},\frac12(1+\frac{\beta(1-\lambda)}{1-\beta\lambda})\}>0.5$ regardless of the values of $\lambda$ and $\beta$, which indicates that closer connections are required within the set $S$. We can thus see how different mechanisms (coordinating vs anti-coordinating) shape the existence and stability of polarized equilibria.


\begin{example}\label{ex:polarized_anti}
Consider a network of $n=30$ agents with weights sampled at random and re-scaled so that, given $\vect z(0)$,  
the graph is a balanced complete bipartite graph with $\mathcal{V}_p=\mathcal{V}_{1}^{+}(0)=\{1,\ldots,15\}$ and $\mathcal{V}_n = \mathcal{V}_{-1}^{-}(0)=\{16,\ldots,30\}$. 
Specifically, There are no edges within sets $\mathcal{V}_{1}^{+}(0)$ or $\mathcal{V}_{-1}^{-}(0)$, but there are edges between two sets $\mathcal{V}_{1}^{+}(0)$ and $\mathcal{V}_{-1}^{-}(0)$. 
It is easy to prove that the conditions in \eqref{eq:condition_polarized_coo} are met. The network obtained is illustrated in Fig.~\ref{fig:polarized_anti}(a). We set$\lambda=0.7,\beta=0.6$, which  satisfy \eqref{eq:condition_polarized_anti}. As shown in Fig.~\ref{fig:polarized_anti}(b) and \ref{fig:polarized_anti}(c), the final actions and opinions converge to a bipartite consensus, consistent with  Theorem~\ref{thm:polarized_convergence_anti}. In this case, the NE reached is not only polarized but also a bipartite consensus as depicted in Fig.\ref{fig:polarized_anti}(b) and \ref{fig:polarized_anti}(c). Notice that the above scenario is quite extreme and it is presented for the sake of having a simple illustrative example, while the conditions in \eqref{eq:condition_polarized_anti} can be satisfied also by less extreme network structures.
\end{example} 


\section{Conclusion}\label{sec:conclusion}
Building on the coordination game framework proposed by~\cite{aghbolagh2023coevolutionary}, we examined consensus equilibria and their domains of attraction. Then, we further extended the modeling framework and analysis to incorporate anti-coordinating agents, establishing a general convergence result, and identifying conditions for consensus and polarized equilibria to exist, as well as their regions of attraction. Finally, we compared our results on the configuration of equilibria and their basins of attraction in the presence od coordinating and anti-coordinating agents. Future research directions include i) extending the analysis to a complete characterization of all the equilibria of the dynamics, ii) encapsulating trust-mistrust relationships, e.g., by means of signed networks; and iii) validating the model using real-world datasets.

\begin{figure}[!htbp]
  \centering
  \includegraphics[width=\linewidth]{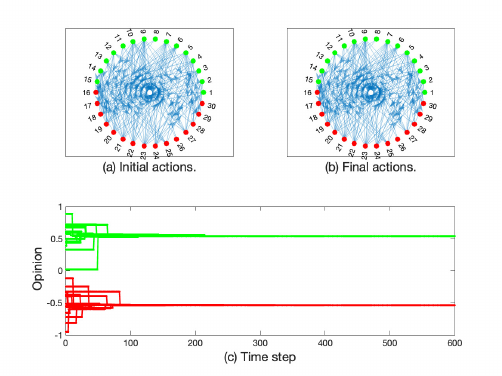}
\caption{Network and actions at (a) $t=0$ and (b) $t=600$. Nodes are in green if $\sgn(y_i)=x_i=+1$ and in red if $\sgn(y_i)=x_i=-1$. In (c), the temporal evolution of the opinion shows convergence to polarization.}
  \label{fig:polarized_anti}
\end{figure}

\appendices
\section{Proof of Theorem~\ref{thm:NE_consesus}}\label{app:thm:NE_consesus}
According to the Definition~\ref{def:best_response} and Lemma~\ref{lemma_Nash_equ}, we need to show that when $\vect{z_{-i}}=[\vect{1}_{n-1},\vect{1}_{n-1}], \forall i \in \mathcal{V}, z_i=[1,1]$ is the best response. That is, $[1,1]\in \mathcal{B}_i(f_i(\cdot,[\vect{1}_{n-1},\vect{1}_{n-1}]))$.
By Assumption~\ref{as:uniform}, we write 
\begin{equation*}\label{eq:th1_eq1}
\begin{split}  & \mathcal{B}_i (f_i(\cdot,[\vect{1}_{n-1},\vect{1}_{n-1}]))=\argmax\nolimits_{z_i} f_i(z_i,[\vect{1}_{n-1},\vect{1}_{n-1}]),  \\
&=\argmax\nolimits_{z_i}\frac{\lambda(1-\beta)}{2}\sum\nolimits_{j\in \mathcal V}a_{ij}(1+\alpha)(1+x_i)\\
&-\frac{1}{2}(1-\lambda)\beta\sum\nolimits_{j \in \mathcal V}w_{ij}(y_i-1)^2-\frac{1}{2}\lambda\beta(y_i-x_i)^2.
\end{split}
\end{equation*}
Since $x_i$ can only take values $-1$ or $1$, the above equation is a discrete optimization problem. We can determine the optimal choice of $x_i$ by substituting these two values into $f_i(z_i, [\vect{1}_{n-1},\vect{1}_{n-1}])$. For $y_i\in[-1,1]$, we need to find the value that maximizes $f_i$ in the interval $[-1, 1]$.

\textbf{Case $x_i=+1$.} We compute
$f_i(z_i,[\vect{1}_{n-1},\vect{1}_{n-1}])=
    \lambda(1-\beta)(1+\alpha)\sum_{j\in \mathcal V}a_{ij}-\frac{1}{2}(1-\lambda)\beta\sum_{j \in \mathcal V}w_{ij}(y_i-1)^2-\frac{1}{2}\lambda\beta(y_i-1)^2$. %
Taking the partial derivative of $f_i$ with respect to $y_i$ and setting it to zero yields \begin{equation*}
\frac{\partial f_i}{\partial y_i} =- (1 - \lambda) \beta \sum\nolimits_{j \in \mathcal V} w_{ij} (y_i - 1) -\lambda \beta (y_i - 1)  = 0,
\end{equation*}
which is maximized by $y_i = 1$, which yields the claim.

\textbf{Case $x_i=-1$.} We compute  
$
f_i(z_i,[\vect{1}_{n-1},\vect{1}_{n-1}])=-\lambda(1-\beta)\sum_{j\in \mathcal V}a_{ij}-\frac{1}{2}(1-\lambda)\beta\sum_{j \in \mathcal V}w_{ij}(y_i+1)^2-\frac{1}{2}\lambda\beta(y_i+1)^2.
$
Taking the partial derivative of $f_i$ with respect to $y_i$ and setting it to zero, 
\begin{equation*}
    \frac{\partial f_i}{\partial y_i} = -(1 - \lambda) \beta \sum\nolimits_{j \in \mathcal V} w_{ij} (y_i +1) -\lambda \beta (y_i + 1) = 0,
\end{equation*}
which is maximized by $y_i=-1$.

In summary,  when $ \vect{z_{-i}} = [\vect{1}_{n-1}, \vect{1}_{n-1}] $, the best response is $[1,1]$ or $[-1,-1]$, depending on $x_i$, verifying that $[\vect{1}_{n}, \vect{1}_{n}] $ and $ [-\vect{1}_{n},-\vect{1}_{n}]$ are NE, using Lemma~\ref{lemma_Nash_equ}.
\qed

\section{Proof of Theorem~\ref{thm:basins of attraction_vp_vnp}}\label{app:thm:basins of attraction_vp_vnp}
First, our assumption on $\vect z(0)$ implies that $\mathcal V_1^+(0) \cup \mathcal V_{-1}^+(0) = \mathcal V$. Exploiting this, the key argument is to show that the following three statements hold true for all  $t\geq 0$: i) $\delta_i(\vect{z}(t))\geq0$, for all $i\in\mathcal V_1^+(0)$, and $\delta_i(\vect{z}(t))>0$, for all $i\in\mathcal V_{-1}^+(0)$; ii) $\vect x(t+1)\geq \vect x(t)$; and iii) $\vect{y}(t+1)\geq \vect 0_n$. Here, the inequalities are taken entry-wise. We prove the three statements jointly by (strong) induction. We start by proving that the statements hold true for $t=0$. To prove i), we compute $\delta_i$ by splitting each sum into two parts, obtaining
\begin{equation*}\label{eq:delta_t_equation_th1}
 \begin{aligned}
  \delta_i(\vect{z}(0)) &\hspace{-.07cm}=\hspace{-.07cm}  p \sum\nolimits_{j \in \mathcal{V}_{1}^+(0)}\hspace{-.1cm}w_{ij} y_j(0)\hspace{-.07cm}+\hspace{-.07cm}p \sum\nolimits_{k \in \mathcal{V}_{-1}^{+}(0)}\hspace{-.1cm} w_{i k} y_k(0) \\
              &+\frac{q}{4}\sum\nolimits_{j\in\mathcal{V}_{1}^{+}(0)}a_{ij}\left[2x_j(0)+\alpha(1+x_j(0))\right]\\
              &+\frac{q}{4}\sum\nolimits_{k\in \mathcal{V}^{+}_{-1}(0)}a_{ik}\left[2x_k(0)+\alpha(1+x_k(0))\right],\\
 \end{aligned} 
\end{equation*}
where $p=2\lambda\beta(1-\lambda)$ and $q=2\lambda(1-\beta)$ are positive constants. For $i\in \mathcal{V}_{1}^+(0)$, we use the cohesiveness of $\mathcal{V}_{1}^+(0)$ and the fact that $\vect{y}(0)\geq 0_n$, to bound
\begin{equation}\label{eq:delta_t_equation2_th1}
 \begin{aligned}
  \delta_i(\vect{z}(0)) & \geq\frac{q}{2}\sum\nolimits_{j\in\mathcal{V}_{1}^{+}(0)}\hspace{-.1cm}a_{ij}(1+\alpha)\hspace{-.07cm}-\hspace{-.07cm}\frac{q}{2}\sum\nolimits_{k\in \mathcal{V}^{+}_{-1}(0)}\hspace{-.1cm}a_{ik}\\
  & \geq\frac{q}{2}\left[\frac{1}{\alpha+2}(1+\alpha)-\frac{\alpha+1}{\alpha+2}\right]\geq 0.
 \end{aligned} 
\end{equation}
Similarly, for $i\in \mathcal{V}_{-1}^{+}(0)$, we use the diffusiveness of $\mathcal {V}_{-1}^{+}(0)$ and the fact that $\vect{y}(0)\geq 0$, to bound
\begin{equation}\label{eq:delta_t_equation3_th1}
 \begin{aligned}
  \delta_i(\vect{z}(0)) & \geq\frac{q}{2}\sum\nolimits_{j\in\mathcal{V}_{1}^{+}(0)}\hspace{-.07cm}a_{ij}(1+\alpha)\hspace{-.07cm}-\hspace{-.07cm}\frac{q}{2}\sum\nolimits_{k\in \mathcal{V}^{+}_{-1}(0)}\hspace{-.07cm}a_{ik}\\
  & >\frac{q}{2}\left[\frac{1}{\alpha+2}(1+\alpha)-\frac{\alpha+1}{\alpha+2}\right]\geq 0.
 \end{aligned} 
\end{equation}
The strict inequality is obtained due to the $\frac{\alpha+1}{\alpha+2}$-diffusiveness of $\mathcal {V}_{-1}^+(0)$, which implies that $\sum\nolimits_{k\in \mathcal{V}^{+}_{-1}(0)}a_{ik}<\frac{\alpha+1}{\alpha+2}$. To prove ii) and iii), let $i_0$ be the agent active at time $t=0$. For all $i\neq i_0$, no update is performed, hence, it clearly holds true that $x_i(1)=x_i(0)$ and $y_i(1)=y_i(0)\geq 0$. Then, we distinguish two cases. If $i_0\in\mathcal V_1^+(0)$, and because $\delta_i(\vect z(0))\geq 0$ as established above, it follows that $x_i(1)=1$ and $y_i(1)\geq\lambda\geq 0$, according to Proposition~\ref{prop:dyn}. If $i_0\in\mathcal V_{-1}^+(0)$, and because $\delta_i(\vect z(0))> 0$ as established above, it follows that $x_i(1)=1$ and $y_i(1)\geq\lambda\geq 0$, according to Proposition~\ref{prop:dyn}, yielding the claim. This concludes the proof of the induction basis.
We are now left to prove that, assuming i)--iii) hold up to time $t$, they necessarily hold at $t+1$. The proof of i) follows the same arguments used to prove the induction basis. In fact, due to  strong induction assumption, we have $x_i(t)\geq x_i(0)$, for all $i\in\mathcal V$, implying
\begin{equation}\label{eq:delta_t_equation4}
 \begin{aligned}
  \delta_i(\vect{z}(t))  &\geq \frac{q}{4}\sum\nolimits_{j\in\mathcal{V}_{1}^{+}(t)}a_{ij}\left[2x_j(t)+\alpha(1+x_j(t))\right]\\
              &+\frac{q}{4}\sum\nolimits_{k\in \mathcal{V}^{+}_{-1}(t)}a_{ik}\left[2x_k(t)+\alpha(1+x_k(t))\right],\\
           \geq   & \frac{q}{4}\sum\nolimits_{j\in\mathcal{V}_{1}^{+}(0)}a_{ij}\left[2x_j(0)+\alpha(1+x_j(0))\right]\\
              &+\frac{q}{4}\sum\nolimits_{k\in \mathcal{V}^{+}_{-1}(0)}a_{ik}\left[2x_k(0)+\alpha(1+x_k(0))\right],
 \end{aligned} 
\end{equation}
which is nonnegative (or strictly positive) as per \eqref{eq:delta_t_equation2_th1}
and \eqref{eq:delta_t_equation3_th1}, respectively. Similar, ii) and iii) are obtained as consequences of the positivity of $\delta_i$ from \eqref{eq:delta_t_equation4}.

Finally, if at a generic time $t$, agent $i_t\in\mathcal V_{-1}^{+}(0)$ activates, then $\delta_{i_t}(\vect{z}(t))>0$, implying that $x_{i_t}(t+1)=1$. Due to Assumption~\ref{as:activation} and the monotonicity property proved above in item ii), it necessarily holds true that $\vect{x}(t)=\vect{1}$, for all $t\geq T$. Finally, the fact that $[\vect{1}_n,\vect{1}_n]$ is the only NE with $\vect{x}=\vect{1}_n$ guarantees convergence~\cite[Theorem 1]{raineri2025}. 

\qed
\section{Proof of Theorem~\ref{thm:basins of attraction_vnvp_consensus}}\label{app:thm:basins of attraction_vnvp_consensus}
For the sake of convenience in our proof, we define positive constants $p=2\lambda\beta(1-\lambda)$ and $q=2\lambda(1-\beta)$.
The proof is based on the following set of arguments. If the partitions $\mathcal{V}_{1}^{+}(0)$ and $\mathcal{V}_{-1}^{-}(0)$ satisfy
\vspace{-0.1cm}
\begin{subequations}
\begin{align}
\sum\nolimits_{j\in\mathcal{V}_{1}^{+}(0)}w_{ij} &\geq 1-\frac{q}{2(p+q)}, && i\in\mathcal{V}_{1}^{+}(0), \label{eq:xcoo_cond_pp} \\
\sum\nolimits_{j\in\mathcal{V}_{-1}^{-}(0)}w_{ij}  &< \frac{q}{2(p+q)}, && i\in\mathcal{V}_{-1}^{-}(0),\label{eq:xcoo_cond_nn}
\end{align}
\end{subequations}
then i) $\delta_i(\vect{z}(t))\geq0$, for all $i\in\mathcal V_1^+(0)$, and $\delta_i(\vect{z}(t))>0$, for all $i\in\mathcal V_{-1}^{-}(0)$; and ii) the action vector $\vect x(t+1)\geq \vect x(t)$.  Then, we show that if there also holds 
\vspace{-0.1cm}
\begin{subequations}
\begin{align}
 \sum\nolimits_{j\in\mathcal{V}_{1}^{+}(0)}w_{ij}\geq\frac{1-2\lambda}{1-\lambda} && i\in\mathcal{V}_{1}^{+}(0),\label{eq:ycoo_cond_pp} \\
\sum\nolimits_{j\in\mathcal{V}_{-1}^{-}(0)}w_{ij}< \frac{\lambda}{1-\lambda},&& i\in\mathcal{V}_{-1}^{-}(0),\label{eq:ycoo_cond_nn} 
\end{align}
\end{subequations}
we have iii) the opinion vector $\vect{y}(t+1))\geq \vect 0_n$. 
We start by proving that the statements hold for $t=0$. To prove i), we compute $\delta_i$ by splitting each sum into two parts, obtaining
\vspace{-0.1cm}
\begin{equation*}\label{eq:deltacon_0001_vp_pn}
 \begin{aligned}
  \delta_i&(\vect{z}(0)) \hspace{-.07cm}=\hspace{-.07cm}  p \sum\nolimits_{j \in \mathcal{V}_{1}^+(0)}\hspace{-.1cm}w_{ij} y_j(0)\hspace{-.07cm}+\hspace{-.07cm}p \sum\nolimits_{k \in \mathcal{V}_{-1}^{-}(0)}\hspace{-.1cm} w_{i k} y_k(0) \\
              &+\frac{q}{4}\left[\sum\nolimits_{j\in\mathcal{V}_{1}^{+}(0)}w_{ij}2x_j(0)+\sum\nolimits_{k\in \mathcal{V}^{-}_{-1}(0)}w_{ik}2x_k(0)\right].
 \end{aligned} 
 \end{equation*}
For $i\in \mathcal{V}_{1}^+(0)$, we use \eqref{eq:xcoo_cond_pp},  $\sum_{k\in \mathcal{V}_{-1}^{-}(0)}w_{ik} = 1-\sum_{j\in \mathcal{V}_{1}^{+}(0)}w_{ij}$ because $W$ is stochastic, and $y_k(0)\in[-1,0),\forall k\in\mathcal{V}_{-1}^{-}(0),y_j(0)\in(0,+1],\forall j\in\mathcal{V}_{1}^{+}(0)$ to bound
\vspace{-0.1cm}
\begin{equation*}\label{eq:deltacon_03_vp_pn}
 \begin{aligned}
 \delta_i(\vect{z}(0)) 
 &\geq -p(1\hspace{-.07cm}-\hspace{-.07cm}\sum\nolimits_{j\in \mathcal{V}_{1}^{+}(0)}w_{ij})+q\Big[\hspace{-0.05cm}\sum\nolimits_{j\in \mathcal{V}_{1}^{+}(0)}w_{ij} \hspace{-.07cm}-\hspace{-.07cm}\frac12\Big]\\
 &\geq  -(p+\frac{q}{2})+(p+q)\sum\nolimits_{j\in \mathcal{V}_{1}^{+}(0)}w_{ij}\\
 &\geq  -(p+\frac{q}{2})+(p+q)\big(1-\frac{q}{2(p+q)}\big)\geq 0.
  \end{aligned} 
\end{equation*}
Similarly, for $i\in \mathcal{V}_{-1}^{-}(0)$, we use \eqref{eq:xcoo_cond_nn}, $\sum_{j\in \mathcal{V}_{1}^{+}(0)}w_{ij}=1-\sum_{k\in \mathcal{V}_{-1}^{-}(0)}w_{ik}$, and $y_k(0)\in[-1,0),\forall k\in\mathcal{V}_{-1}^{-}(0),y_j(0)\in(0,+1],\forall j\in\mathcal{V}_{1}^{+}(0)$ to bound
\vspace{-0.1cm}
\begin{equation*}\label{eq:delta_t_equation3_th2}
 \begin{aligned}
 \delta_i(\vect{z}(0)) &\geq -p\hspace{-.1cm}\sum\nolimits_{k\in \mathcal{V}_{-1}^{-}(0)}w_{ik}+q\Big[\frac12 -\sum\nolimits_{k\in \mathcal{V}_{-1}^{-}(0)}w_{ik}\Big]\\
 &> -(p+q)\sum\nolimits_{k\in \mathcal{V}_{-1}^{-}(0)}w_{ik}+\frac{q}{2}\\
  &> -(p+q)\frac{q}{2(p+q)}+\frac{q}{2}\geq 0.
 \end{aligned} 
\end{equation*}
To prove ii) and iii), let $i_0$ be the agent active at time $t=0$. For all $i\neq i_0$, it holds  $x_i(1)=x_i(0)$ and $y_i(1)=y_i(0)\geq 0$. Then, we distinguish two cases. If $i_0\in\mathcal V_{1}^{+}(0)$, being $\delta_i(\vect z(0))\geq 0$, then $x_i(1)=1$ and $y_i(1)\geq(1-\lambda)(-\sum_{k\in \mathcal{V}_{-1}^{-}(0)}w_{ik} )+\lambda\geq 0$
where we used \eqref{eq:ycoo_cond_pp} and $\sum_{j\in \mathcal{V}_{1}^{+}(0)}w_{ij}+ \sum_{k\in \mathcal{V}_{-1}^{-}(0)}w_{ik} =1 $. If $i_0\in\mathcal V_{-1}^{-}(0)$, being $\delta_i(\vect z(0))> 0$, then $x_i(1)=1$ and $y_i(1)\geq(1-\lambda)(-\sum\nolimits_{k\in \mathcal{V}_{-1}^{-}(0)}w_{ik} )+\lambda>0$ according to \eqref{eq:ycoo_cond_nn}, concluding the proof of the induction basis. 

Finally, by using a strong induction argument, similar to the proof of Theorem~\ref{thm:basins of attraction_vp_vnp}, it can also be established that, assuming that the three properties hold up to time $t$, they necessarily hold true at time $t+1$, yielding convergence. 
\qed
\section{Proof of Theorem~\ref{thm:anti_convergence_all}}\label{app:thm:anti_convergence_all}
The proof primarily uses game-theoretic methods. We first introduce the following function:
\begin{equation}\label{mixed_Phi(z)}
 \begin{aligned}
  &\Phi(z)=-\sum\nolimits_{i\in\mathcal{V}}\sum\nolimits_{j\in\mathcal{V}\backslash \{i\}}\eta_i\frac{a_{ij}}{2}\Big[(1-x_j)(1-x_i)\\
  &-(1+\alpha)(1+x_j)(1+x_i)\Big]-\frac{1}{2}\sum_{i\in\mathcal{V}}\sum_{j\in\mathcal{V}}\frac{ w_{ij}}{2}(y_i-y_j)^2\\[-6pt]
  &-\frac{1}{2}\sum\nolimits_{k\in\mathcal{V}}\frac{\lambda_i}{(1-\lambda_i)
  }(x_k-y_k)^2,
 \end{aligned} 
\end{equation} 
with $\eta_i=\lambda_i(1-\beta_i)/4\beta_i(1-\lambda_i)$. 
Following an approach similar to~\cite[Lemma 1]{aghbolagh2023coevolutionary} (and hence details are omitted), one can prove that, under Assumption~\ref{as:uniform} and when $\varepsilon=-1$, the coevolutionary game is a generalized ordinal potential game~\cite{monderer1996potential} with the potential function in \eqref{mixed_Phi(z)}. 
Following the approach of \cite[Theorem~1]{aghbolagh2023coevolutionary}, we can prove that the generalized ordinal potential game necessarily converges to an equilibrium, yielding the claim. The details are omitted.
\qed
\section{Proof of Theorem~\ref{thm:consensus_ equilibrium_anti}}\label{app:consensus_ equilibrium_anti}
 Our objective is to identify conditions for when the state $\vect z^*=(\vect x^*,\vect y^*)=[\vect{1}_n,\vect{1}_n]$ (or $-[\vect{1}_n,\vect{1}_n]$) is an equilibrium of the coevolutionary model on the two-layer network $\gG$. If  $[\vect{1}_n,\vect{1}_n]$ is an equilibrium, then from~\eqref{update_rule_coo_xy}, there must hold
\begin{subequations}
\begin{align}
    1=x_i^*&=\mathcal{S}(\delta_i(z^*)),\label{anti_action_update_opti}\\ 
    1=y_i^*&=(1-\lambda)\sum\nolimits_{j\in\mathcal{V}}w_{ij}y_j^{*}+\lambda\mathcal{S}(\delta_i(z^*)).\label{anti_opinion_update_opti}  
\end{align}
\end{subequations}
\eqref{anti_opinion_update_opti} can be further simplified to $1= (1-\lambda)\sum\nolimits_{j\in\mathcal{V}}w_{ij}+\lambda$, which always holds since $W$ is row-stochastic. Meanwhile, \eqref{delta_i(z)} evaluated at $[\vect{1}_n,\vect{1}_n]$ yields \[\delta_i(\vect z^*) = -\frac{\lambda(1-\beta)}{2} \sum_{j\in \mathcal V}a_{ij}(2+2\alpha) +2(1-\lambda)\lambda\beta\sum\nolimits_{j \in \mathcal V}w_{ij}.\]
It follows that $\delta_i(\vect z^*) > 0$ (which ensures $x_i^* =\mathcal{S}(\delta_i(\vect z^*)) = 1$) is implied by $\sum_{j\in \mathcal V}a_{ij} <\frac{2(1-\lambda)\beta}{(1-\beta)(1+\alpha)}$. Thus, $[\vect{1}_n,\vect{1}_n]$ satisfies the equilibrium equations if \eqref{eq:condition_A} holds for all $i\in\mathcal V$.
\qed

\section{Proof of Theorem~\ref{thm:polarized_anti}}\label{app:thm:polarized_anti}
The proof is divided into two parts. In the first part, we establish that given a fixed action vector $\vect{x}^*$ that is polarized concerning a partitioning $(\mathcal{V}_p,\mathcal{V}_n)$, and under the conditions in \eqref{eq:pol_y_cond_anti}, the unique vector $\vect{y}^*$ that satisfies the equilibrium conditions for \eqref{opinion_update} is polarized. The proof that $\vect{y}^*$ is polarizing with respect to $(\mathcal{V}_p,\mathcal{V}_n)$ follows closely the proof of~\cite[Theorem~2]{aghbolagh2023coevolutionary}, and it is thus omitted. 

In the second part, with $\vect y^*$ being the polarized vector identified in part one, we show that conditions in \eqref{eq:pol_xp_cond_anti} and \eqref{eq:pol_xn_cond_anti}, guarantee that the action vector $\vect{x}^*$  is invariant under \eqref{action_update}, which implies that $\vect{z}^*$ is a polarized equilibrium of the coevolutionary model. Given $\vect{y}^*$ which is polarized with respect to $(\mathcal{V}_p,\mathcal{V}_n)$, define $y_m^+=\min_{i\in\mathcal{V}_p } y_i^*$ and $y_m^-=\min_{i\in\mathcal{V}_n}$ and $y_M^+ = \max_{i\in\mathcal{V}_p} y_i^*$ and $y_M^- = \max_{i\in\mathcal{V}_n} y_i^*$.

For any $i\in\mathcal{V}_p $ and $t\geq 0$, $x_i(t) = +1 = x_i(t+1) = +1$ if $\delta_i(\vect{z}(t)) \geq 0$. From \eqref{eq:delta_ass1_4_coo}, we obtain the following bound
\begin{align}
    \delta_i(\vect{z})&=p\Big[\sum\nolimits_{j\in \mathcal{V}_p }w_{ij}y_j+\sum\nolimits_{k\in \mathcal{V}_n}w_{ik}y_k\Big] \nonumber \\
   &-\frac{q}{2}\Big[\sum\nolimits_{j\in\mathcal{V}_p }w_{ij}x_j+\sum\nolimits_{k\in\mathcal{V}_n}w_{ik}x_k\Big] \nonumber \\ 
   &\geq p\Big[\underline{d}_p y_m^++(1-\underline{d}_p )y_m^-\Big]-\frac{q}{2}\Big[\bar{d}_p -(1-\bar{d}_p )\Big],\label{eq:delta_cond_a}
\end{align}
where $p=2\lambda\beta(1-\lambda), q=2\lambda (1-\beta)$. We shall now prove that 
\be\label{eq:deltaTheoremPol}
2\lambda\beta(1-\lambda)(\bar{d}_p  y_m^+ +(1-\bar{d}_p )y_m^-)+(1-\beta)\lambda(1-2\bar{d}_p)>0,
\ee
which implies that \eqref{eq:delta_cond_a} holds. Before proceeding, we state two important inequalities, whose derivations are found in the proof of ~\cite[Theorem~2]{aghbolagh2023coevolutionary}, and is obtained by leveraging the equilibrium expression for $y^*_i$, i.e., setting $y_i(t+1) = y_i(t)$ in \eqref{update_rule_coo_xy}. In particular, one can show that $y_m^+\geq \lambda+(1-\lambda)[\underline{d}_p y_m^+ +(1-\underline{d}_p )y_m^-]$ and also that
\begin{equation}\label{eq:yPlusInEqFinal_1_new}
y_m^+\geq\frac{\lambda-\lambda(1-\lambda)(1-\underline{d}_p+\bar{d}_n )}{1-(1-\lambda)(\underline{d}_p+\bar{d}_n )-(1-\lambda)^2(1-\underline{d}_p-\bar{d}_n )}.
\end{equation}
The first inequality can be used to show that $2\beta(y_m^+-\lambda)-(1-\beta)(2\bar{d}_p -1)>0$ implies \eqref{eq:deltaTheoremPol}. Substituting $y^+_m$ from \eqref{eq:yPlusInEqFinal_1_new} into this latest inequality, and simplifying, yields \eqref{eq:pol_xp_cond_anti}. Thus, \eqref{eq:pol_xp_cond_anti} implies \eqref{eq:delta_cond_a}, and thus $\delta_i(\vect z) > 0$. In other words, $x_i(t+1) = x_i(t) = +1$ for all $t\geq 0$ if \eqref{eq:pol_xp_cond_anti} holds.

For any $i\in\mathcal{V}_n$ and $t\geq 0$, $ x_i(t+1) = x_i(t) = -1$ iff $\delta_i(x(t),y(t)) < 0$, which is implied by
\begin{equation}
\begin{array}{l}
\delta_i\displaystyle(\vect{z})=p\sum\nolimits_{j\in \mathcal{V}_p }w_{ij}y_j+p\sum\nolimits_{k\in \mathcal{V}_n}w_{ik}y_k  \\\quad
   \displaystyle-\frac{q}{2}\Big[\sum\nolimits_{j\in\mathcal{V}_p }w_{ij}x_j+\sum\nolimits_{k\in\mathcal{V}_n}w_{ik}x_k\Big]\\
   \quad\displaystyle\leq p\Big[\bar{d}_n  y_M^-+(1-\bar{d}_n )y_M^+\Big]-\frac{q}{2}\Big[(1-\bar{d}_n )-\bar{d}_n \Big], \label{eq:delta_vp_cond_a}
\end{array}
\end{equation}
where $p=2\lambda\beta(1-\lambda), q=2\lambda (1-\beta)$. We omit the detailed arguments, which follow similarly to the above, one can show that \eqref{eq:pol_xn_cond_anti} implies the right hand side of \eqref{eq:delta_vp_cond_a} is strictly negative, and thus $\delta_i(\vect z) < 0$.
\qed

\section{Proof of Theorem~\ref{thm:polarized_convergence_anti}}\label{app:polarized_convergence_anti}
First, our assumption on polarized $\vect z(0)$ implies that $\mathcal V_1^+(0) \cup \mathcal V_{-1}^{-}(0) = \mathcal V$. 
For the sake of convenience in our proof, we define positive constants $p=2\lambda\beta(1-\lambda)$ and $q=2\lambda(1-\beta)$.
The proof is by induction. To begin, we show that if the partitions $\mathcal{V}_{1}^{+}(0)$ and $\mathcal{V}_{-1}^{-}(0)$ satisfy \eqref{eq:condition_polarized_anti}, then i) $\vect{x}(1)$ is polarized with respect to the same partition and ii) $\vect{y}(1)$ is polarized with respect to the same partition, i.e., $\mathcal{V}_{1}^{+}(1) = \mathcal{V}_{1}^{+}(0)$ and $\mathcal{V}_{-1}^{-}(1) = \mathcal{V}_{-1}^{-}(0)$
Then, Theorems~\ref{thm:anti_convergence_all} and~\ref{thm:polarized_anti} are leveraged to guarantee convergence to equilibrium , and that $\mathcal{V}_{1}^{+}(t) = \mathcal{V}_{1}^{+}(0)$ and $\mathcal{V}_{-1}^{-}(t) = \mathcal{V}_{-1}^{-}(0)$ for all $t\geq 0$ (implying the equilibrium is polarized), respectively.

We start from the induction basis. To prove i), we compute $\delta_i(\vect{z}(0))$ by splitting each sum into two parts, obtaining 
\begin{equation}\begin{array}{l}\displaystyle\delta_i(\vect{z}(0)) \hspace{-.07cm}=\hspace{-.07cm}  p\Big[\sum\nolimits_{j \in \mathcal{V}_{1}^+(0)}\hspace{-.1cm}w_{ij} y_j(0)\hspace{-.07cm}+\hspace{-.07cm}\sum\nolimits_{k \in \mathcal{V}_{-1}^{-}(0)}\hspace{-.1cm} w_{i k} y_k(0)\Big]\\ \quad\displaystyle-\frac{q}{2}\Big[\sum\nolimits_{j\in\mathcal{V}_{1}^{+}(0)}w_{ij}x_j(0)+\sum\nolimits_{k\in \mathcal{V}^{-}_{-1}(0)}w_{ik}x_k(0)\Big].
\end{array}\end{equation}
For $i\in \mathcal{V}_{1}^+(0)$, we use the following facts: a) the inequalities in \eqref{eq:condition_polarized_anti}, b) $\beta<1/(2-\lambda)$, c) $y_k(0)\in[-1,0),\forall k\in\mathcal{V}_{-1}^{-}(0),y_j(0)\in(0,+1],\forall j\in\mathcal{V}_{1}^{+}(0)$, and d) $\sum_{k\in \mathcal{V}_{-1}^{-}(0)}w_{ik} = 1-\sum_{j\in \mathcal{V}_{1}^{+}(0)}w_{ij}$ to bound
\vspace{-0.1cm}
\begin{equation}\label{eq:deltacon_03_vp_pn_polar}
 \begin{aligned}
\delta_i&(\vect{z}(0))=p\left[\sum\nolimits_{j\in \mathcal{V}_{+1}^{+}(0)}w_{ij}y_j+\right.
\left.\hspace{-.07cm}\sum\nolimits_{k\in \mathcal{V}^{-}_{-1}(0)}w_{ik}y_k\right]\\
  & \quad \ -q\Big( \sum\nolimits_{j\in\mathcal{V}_{+1}^{+}(0)}w_{ij}-\frac12\Big)\\
  &\geq -p\sum\nolimits_{k\in \mathcal{V}^{-}_{-1}(0)}w_{ik}-q\sum\nolimits_{j\in\mathcal{V}_{+1}^{+}(0)}w_{ij} +\frac{q}{2}\\
   &\geq (p-q)\sum\nolimits_{j\in \mathcal{V}_{+1}^{+}(0)}w_{ij}-p+\frac{q}{2}\\
  &\geq (p-q) \frac12\big(1+\frac{p}{p-q}\big)+\big(\frac{q}{2} +p\big)> 0.
\end{aligned}
\end{equation}
For $i\in \mathcal{V}_{-1}^{-}(0)$, we use the inequalities in \eqref{eq:condition_polarized_anti}, the condition $\beta<1/(2-\lambda)$, the fact that $y_k(0)\in[-1,0),\forall k\in\mathcal{V}_{-1}^{-}(0),y_j(0)\in(0,+1],\forall j\in\mathcal{V}_{1}^{+}(0)$, and the equality $\sum_{j\in \mathcal{V}_{1}^{+}(0)}w_{ij}=1-\sum_{k\in \mathcal{V}_{-1}^{-}(0)}w_{ik}$ to bound
\vspace{-0.1cm}
\begin{equation}\label{eq:deltacon_04_vp_pn_polar}
 \begin{aligned}
\delta_i&(\vect{z}(0))=p\left[\sum\nolimits_{j\in \mathcal{V}_{+1}^{+}(0)}w_{ij}y_j\right.
\left.\hspace{-.07cm}+\sum\nolimits_{k\in \mathcal{V}^{-}_{-1}(0)}w_{ik}y_k\right]\\
  & \quad \ -q\Big(\frac12 -\hspace{-.1cm}\sum\nolimits_{j\in\mathcal{V}_{-1}^{-}(0)}w_{ij}\Big)\\
  &\leq p\sum\nolimits_{j\in \mathcal{V}_{+1}^{+}(0)}w_{ij}-q\Big(\frac12 -\sum\nolimits_{k\in \mathcal{V}^{-}_{-1}(0)}w_{ik}\Big)\\
  &\leq -(p-q)\sum\nolimits_{k\in \mathcal{V}^{-}_{-1}(0)}w_{ik}+\big(p-\frac{q}{2}\big)\\
  &\leq -(p-q) \frac12\big(1+\frac{p}{p-q}\big)+\big(p-\frac{q}{2}\big)< 0.
\end{aligned}
\end{equation}
Let $i_0$ be the agent active at time $t=0$. For all $i\neq i_0$, no update is performed, so those agents in $\mathcal{V}^-_{-1}(0)$ remain in $\mathcal{V}^-_{-1}(1)$ (and agents in $\mathcal{V}^+_{+1}(0)$ stay in $\mathcal{V}^+_{+1}(1)$ ). Concerning $i_0$, we distinguish two cases. If $i_0\in\mathcal V_{1}^{+}(0)$, being $\delta_i(\vect z(0))\geq 0$, then $x_i(1)=x_i(0)=1$. From \eqref{opinion_update}, we obtain $y_i(1)\geq(1-\lambda)(-\sum\nolimits_{k\in \mathcal{V}_{-1}^{-}(0)}w_{ik} )+\lambda>0,$ with the latter due to the left inequality of \eqref{eq:condition_polarized_anti}. If $i_0\in\mathcal V_{-1}^{-}(0)$, being $\delta_i(\vect z(0))< 0$, then $x_i(1)=-1$ and similarly, $y_i(1)\leq(1-\lambda)\sum\nolimits_{j\in \mathcal{V}_{+1}^{+}(0)}w_{ij}-\lambda<0$ according to \eqref{eq:condition_polarized_anti}. This concludes the proof of the induction basis. 
By using an induction argument, it can be shown by repeating the arguments used in \eqref{eq:deltacon_03_vp_pn_polar} and \eqref{eq:deltacon_04_vp_pn_polar} that if $\vect{z}(t)$ is polarized with respect to the same partitioning and \eqref{eq:condition_polarized_anti} holds, then also $\vect{z}(t+1)$ is polarized with respect to that partitioning. Hence, the induction principle yields the first claim. 


Finally, Theorem~\ref{thm:anti_convergence_all} guarantees  convergence to an equilibrium: $\lim_{t\to\infty} \vect{z}(t)\to\vect{z}^*$. Now, $\vect z(0)$ is polarized with respect to $(\mathcal{V}_{1}^{+}(0),\mathcal{V}_{-1}^{-}(0))$ by hypothesis, and the above has established that $\mathcal{V}_{1}^{+}(t) = \mathcal{V}_{1}^{+}(0)$ and $\mathcal{V}_{-1}^{-}(t) = \mathcal{V}_{-1}^{-}(0)$ for all $t\geq 0$. In other words, the elements in the two sets remain unchanged over time. Separately, the inequalities in \eqref{eq:condition_polarized_anti} satisfy the conditions~(\ref{eq:pol_y_cond_anti})--(\ref{eq:pol_xn_cond_anti}) of Theorem~\ref{thm:polarized_anti} (which can be applied since Assumptions~\ref{as:uniform}--\ref{as:polarization} hold), and thus $\vect z^*$ must be polarized with respect to $(\mathcal{V}_{1}^{+}(0),\mathcal{V}_{-1}^{-}(0))$.
For notational convenience, we denote $\mathcal{V}_{1}^{+}(0)$ and $\mathcal{V}_{-1}^{-}(0)$ as $\mathcal{V}_p$ and $\mathcal{V}_n$, which completes the proof.
\qed

\bla

\bibliographystyle{IEEEtran}
\bibliography{Coevo}

\begin{IEEEbiography}[{\includegraphics[width=1in,height=1.25in,clip,keepaspectratio]{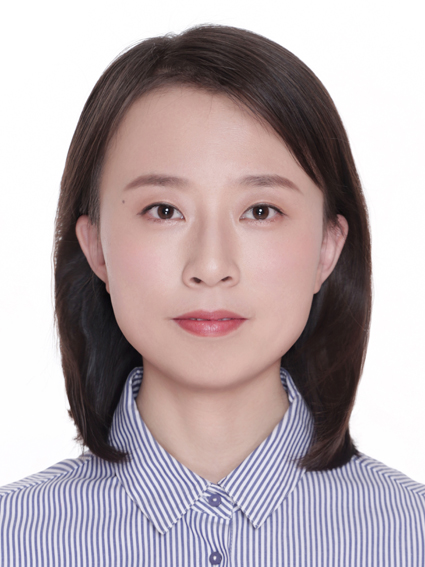}}]{Hong Liang} received the B.S. degree in basic mathematics from Inner Mongolia University, China, in 2016 and the M.S. degree in operations research and control from Dalian University of Technology, China, in 2019. She is currently working toward the Ph.D. degree in control engineering with the Dalian University of Technology, China. Her current research interests include multi-agent systems, game theory, and social networks.
\end{IEEEbiography}

\begin{IEEEbiography}[{\includegraphics[width=1in,height=1.25in,clip,keepaspectratio]{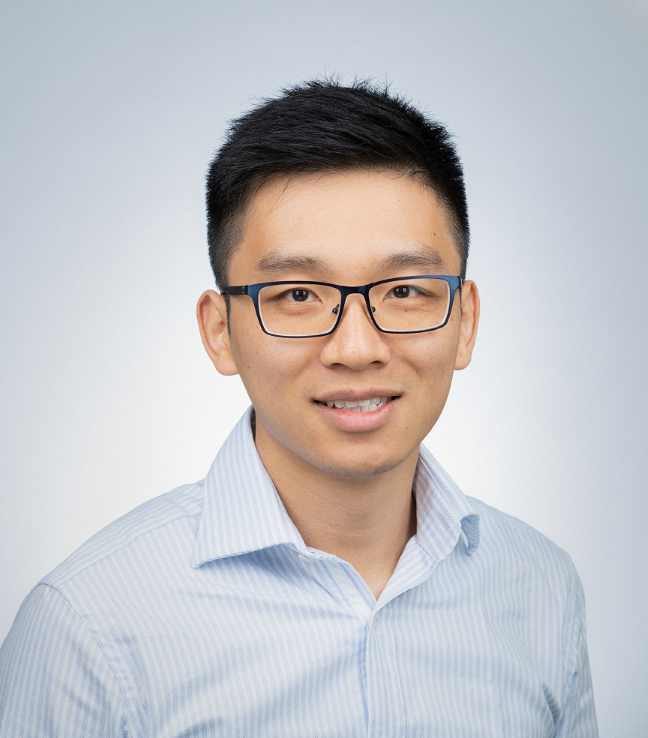}}]{Mengbin Ye} received the BE degree (with First Class Honours) in mechanical engineering from the University of Auckland, New Zealand (2013) and the PhD degree in engineering at the Australian National University, Australia (2018). From 2018-20, he was a postdoctoral researcher with the Faculty of Science and Engineering, University of Groningen, Netherlands. He joined Curtin University, Australia, in 2020 and held a four-year Western Australia Premier's Early Career Fellowship (2021-25). In 2025, he joined the University of Adelaide, Australia, as a Senior Research Fellow, funded by an Australian Research Council Discovery Early Career Research Award (DECRA). He was awarded the J.G. Crawford Prize (Interdisciplinary) in 2018, ANU’s premier award recognizing graduate research excellence, and received the 2018 Springer PhD Thesis Prize. His current research interests include dynamics on social networks, epidemic modeling and control, and cooperative control of multi-agent systems.
\end{IEEEbiography}

\begin{IEEEbiography}[{\includegraphics[width=1in,height=1.25in,clip,keepaspectratio]{ 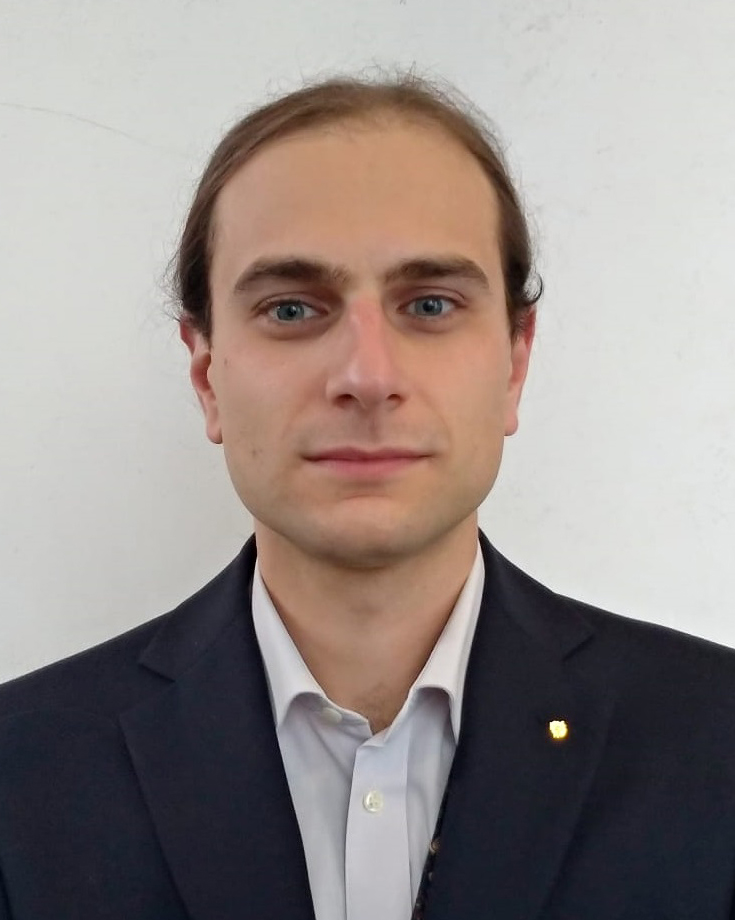}}]{Lorenzo Zino}  is an Assistant Professor with the Department of Electronics and Telecommunications, Politecnico di Torino, Italy, since 2022. He  received BS (2012), MS (summa cum laude, 2014), and PhD (with honors, 2018) in Applied Mathematics  from  Politecnico  di  Torino. He held Research Fellowships at Politecnico di Torino, University of Groningen, and New York University. His research interests include modeling, analysis, and control of network dynamics,  and game theory. He has co-authored more than 80 scientific publications, including more than 50 journal papers. 
In 2024, he was the recipient of the Best Young Author Journal Paper Award from the  IEEE CSS Italy Chapter. He is member of the Editorial Board of \emph{Scientific Reports}, Associate Editor of the \emph{Journal of Computational Science}, and member of the CEB for IEEE CSS and EUCA.
\end{IEEEbiography}

\begin{IEEEbiography}[{\includegraphics[width=1in,height=1.25in, clip,keepaspectratio]{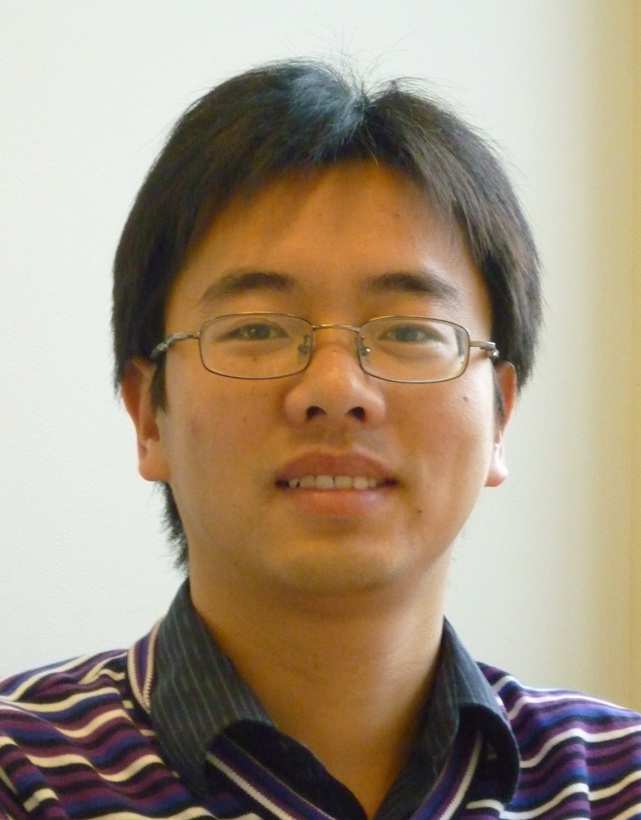}}]{Weiguo Xia}   received the B.Sc. and M.Sc. degrees in applied mathematics from Southeast University, Nanjing, China, in 2006 and 2009, respectively, and the Ph.D. degree in systems and control from the Faculty of Science and
Engineering, University of Groningen,Groningen, The Netherlands, in 2013.He is currently a Professor with the School of Control Science and Engineering, Dalian University of Technology, Dalian, China. From 2013 to 2015, he was a Postdoctoral Researcher with the ACCESS Linnaeus Centre, Royal Institute of Technology, Stockholm, Sweden. His research interests include complex networks, social networks, and multiagent systems. He is Associate Editor for Systems \& Control Letters.
\end{IEEEbiography}

\end{document}